\RequirePackage{thm-restate}
\documentclass[a4paper,USenglish,cleveref,autoref,thm-restate]{article}
\usepackage{tda2}

%Editor-only macros:: begin (do not touch as author)%%%%%%%%%%%%%%%%%%%%%%%%%%%%%%%%%%
%\EventEditors{John Q. Open and Joan R. Access}
%\EventNoEds{2}
%\EventLongTitle{42nd Conference on Very Important Topics (CVIT 2016)}
%\EventShortTitle{CVIT 2016}
%\EventAcronym{CVIT}
%\EventYear{2016}
%\EventDate{December 24--27, 2016}
%\EventLocation{Little Whinging, United Kingdom}
%\EventLogo{}
%\SeriesVolume{42}
%\ArticleNo{23}
%%%%%%%%%%%%%%%%%%%%%%%%%%%%%%%%%%%%%%%%%%%%%%%%%%%%%%

\begin{document}

\title{How Small Can Faithful Sets Be? Ordering Topological Descriptors}

\author{}
%\author{Brittany Terese Fasy
%	{\orcid{0000-0003-1908-0154}}
%	\thanks{Depart. of Mathematical Sciences,
%    Montana State U.} \thanks{School of Computing,
%            Montana State
%	    U.}
%        \and
%	David L. Millman
%	\orcid{0000-0003-4112-3026}
%	\footnotemark[2]
%	\thanks{Blocky Inc.}
%        \and
%	Anna Schenfisch
%	\orcid{0000-0003-2546-5333}
%	\thanks{Corresponding author, Depart. of Mathematics and Computer
%	Science, Eindhoven U. of Technology
%	\newline
%   {\scriptsize {\tt brittany.fasy@montana.edu, david@blocky.rocks, a.k.schenfisch@tue.nl}
%    }}
%    }
%

\date{}

\author{Brittany Terese Fasy
\thanks{School of Computing and Dept.\ Mathematical Sciences,
            Montana State
	    U., \tt brittany.fasy@montana.edu}
	\and
	David L. Millman \thanks{Blocky Inc., \tt david@blocky.rocks}
        \and
        Anna Schenfisch \thanks{Dept.\ Mathematics and Computer Science,
        Eindhoven U. of Technology, \tt a.k.schenfisch@tue.nl}}

% Add the appropriate index information
\index{Anna Schenfisch}
\index{Brittany Terese Fasy}

\maketitle

\begin{abstract}
    Recent developments in shape reconstruction and comparison call for the use of
    many different (topological) descriptor types, such as persistence
    diagrams and Euler characteristic functions. We establish a framework to
    quantitatively compare the strength of different descriptor types, setting
    up a theory that
    allows for future comparisons and analysis of descriptor types and that can inform
    choices made in applications.  We
    use this framework to partially order a set of six common descriptor types.
    We then give lower bounds on
    the size of sets of descriptors that uniquely correspond to simplicial
    complexes, giving insight into the advantages of using verbose
    rather than concise topological~descriptors.
\end{abstract}

%-----------------Introduction----------------
\section{Introduction}\label{sec:intro}
The persistent homology transform and Euler characteristic transform
were first explored in~\cite{turner2014persistent}, which
shows the uncountable set of persistence diagrams (or Euler characteristic
functions, respectively) corresponding to lower-star filtrations in every
possible direction uniquely represents the shape being filtered. That is, the
uncountable set of topological descriptors is \emph{faithful} for the shape.
Faithfulness of topological transforms is closely related to \emph{tomography}
\cite{schapira1995tomography, lebovici2022hybrid}, and the alternate proof of
faithfulness
given in~\cite{ghrist2018euler} makes use of tools from this field.
Of course, applications can only use finite sets of descriptors, which
are not guaranteed to be faithful.  This motivates theoretical work on
finding finite faithful sets of descriptors~\cite{belton2019reconstructing,
curry2022many,fasy2022efficient, mickaPhD}, and such work
supports the use of topological descriptors in shape comparison applications. 
Many descriptor types are used in applications, such as versions
of persistence diagrams
\cite{lee2017quantifying, rizvi2017single, lawson2019persistent,
tymochko2020using, wang2019statistical,bendich2016persistent},
Euler characteristic functions~\cite{jiang2020weighted, amezquita2022measuring,
nadimpalli2023euler, richardson2014efficient,crawford2020predicting,
meng2022randomness,marsh2023stability}, Betti
functions~\cite{saadat2021topological, li2023classification,
frosini2013persistent,pranav2017topology, wilding2021persistent,van2010alpha},
and others~\cite{giusti2015clique,
singh2007topological,batakci2023comparing}.

Faithfully representing a shape with a small number of descriptors is desirable
for computational and storage reasons.  How, then, should investigators choose
the particular topological descriptor type to use in applications? While
computational complexities of computing each topological descriptor type are
well-studied, it is not yet known how the use of particular descriptor types
impacts the minimum size of faithful sets.  This uncertainty motivates our main
questions: \emph{how can we rigorously compare descriptor types in terms of
their ability to uniquely correspond to shapes}, and \emph{how do popular
descriptor types compare?}

We prove the partial order of \figref{abstract}.
\begin{figure}[h!]
    \centering
    \includegraphics[width=.8\textwidth]{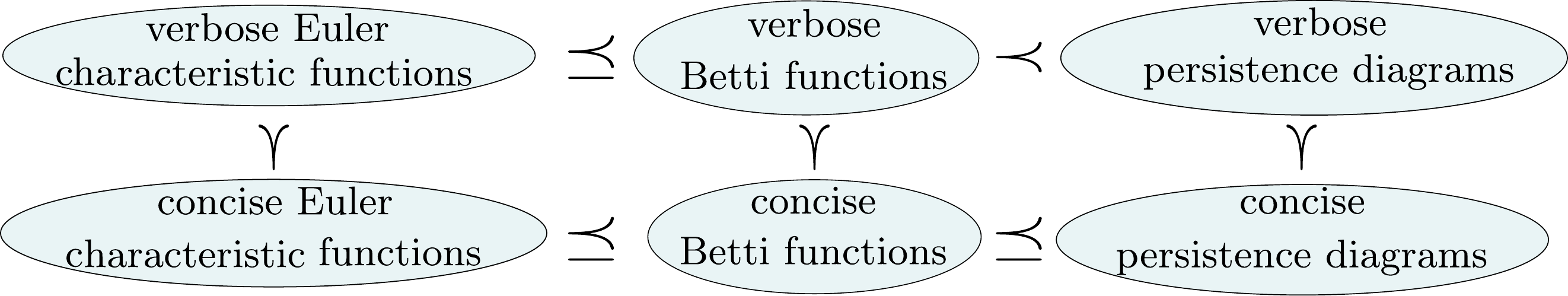}
	\caption{Summary of the relations between common descriptor types.
        For example, concise persistence diagrams are always at
	    least as efficient as concise Betti functions at forming faithful sets,
	    where efficiency is measured by cardinality of faithful sets.
           % That
	   % is, concise Betti functions are \emph{weaker} than concise persistence
	   % diagrams.
           }
\label{fig:abstract}
\end{figure}
Additionally, we provide lower bounds on the cardinality of faithful sets for both
concise and verbose descriptors, and identify properties that
indicate concise descriptors are
generally much weaker than verbose descriptors. This suggests applications
research may benefit from the use of verbose descriptors instead of the more
widely adopted concise descriptors.

%--------------Background----------------------
\section{Preliminary Considerations}\label{sec:preliminary}
In this section, we provide background and definitions used throughout.

\paragraph{Simplicial Complexes and Filtrations}
We assume the reader is familiar with foundational ideas from topology, such as homology, Betti number ($\beta_k$) and Euler characteristic $(\chi)$.
See, e.g.,~\cite{hatcher, edelsbrunner2010computational}.
%We establish a few notational conventions here.
For a simplicial complex~$K$ and $i \in \N$, we use the notation $K_i$ for
the set of its~$i$-simplices
and~$n_i$ as the number of~$i$-simplices.
Furthermore, we assume our simplicial complexes are
\emph{abstract} simplicial complexes immersed in
Euclidean space such that each simplex is embedded
and the vertices are in general position. Specifically, this is:
\begin{assumption}
    \label{ass:mostgeneral}
    A simplicial complex $K$ immersed in $\R^d$ is in
    \emph{general position} if, for all~$V \subseteq K_0$ with $|V|\leq
    d+1$, the set $V$ is affinely independent
\end{assumption}

A \emph{filter} of~$K$ is a monotone
map~$f \colon K \to \R$ such that, each sublevel set~$f^{-1}(-\infty, {t}]$ is
either empty or a simplicial complex.  Letting~$F(t) :=
f^{-1}(-\infty, {t}]$, the sequence~$\{ F(t)  \}_{t \in \R}$ is the
\emph{filtration} associated to~$f$. For each~$k \in \N$, the inclusion~$F(i)
\hookrightarrow F(j)$ induces a linear map on homology,~$H_k(F(i)) \to
H_k(F(j))$.  We write~$\beta^{i,j}_k(K,f)$ to mean rank of this map, or
simply~$\beta_k^{i,j}$ if $K$ and $f$ are clear from context.
We call a filter function~$f': K \to \{ 1,2, \ldots, \# K\}$ a \emph{compatible
index filter} for $f$ if, for all~$\tau, \sigma \in K$ with $f(\tau) \leq
f(\sigma)$, then we have~$f'(\tau) \leq f'(\sigma)$. Every filter function has at least
one compatible index filter.

The \emph{lower-star filter} of a simplicial complex $K$ immersed in
$\R^d$ with respect to some direction~$s
\in \S^{d-1}$, is the map~$f_s: K \to \R$ that takes a simplex $\sigma$ to the
maximum height of its vertices with respect to direction~$s$, i.e.,~$f_s(\sigma) :=
\max\{s\cdot v \mid v \in K_0 \cap \sigma\}$, where~$s \cdot v$ denotes the dot
product.

\paragraph{Faithfully Representing a Simplicial Complex}
Since we define relations based on the ability of descriptor types to represent
particular filtrations of simplicial complexes, we take the following
definition.

\begin{definition}[Topological Descriptors]
    \label{def:descriptors}
    A \emph{(topological) descriptor type} is a map whose domain is the
	collection of filtered simplicial complexes. Given
        such map, $\gen$, a \emph{(topological)
	descriptor of type~$D$} is the image of a specific
	filtered simplicial complex under~$D$.
\end{definition}

When considering many filtrations of the same simplicial complex, we
may index the filtrations by some parameter set, $P$. If a descriptor of type
$\gen$ corresponds to a filtration of a simplicial complex $K$ where the
filtration is parameterized by~$p \in P$, we use the notation
$\dirDescLong{\gen}{\simComp}{p}$, or $\dirDesc{\gen}{p}$ when $K$ is clear
from context. We refer to the parameterized set of descriptors
as~$\descSet{\gen}{K}{P} := \{\left(p, \gen(K,p) \right)\}_{p \in P}$.

We compare descriptor types by their ability to efficiently and
uniquely identify a shape. The ability for a set of descriptors to 
uniquely identify a shape is formalized as follows: 
\begin{definition}[Faithful]\label{def:faithful}
	Let $K$ be a simplicial complex, $P$
	parameterize a set of
	filtrations of~$K$, and $\gen$ be a
	topological descriptor type. We say that~\emph{$\descSet{\gen}{K}{P}$
	is faithful} if, for any simplicial complex~$L$ we
	have the equality~$\descSet{\gen}{L}{P} = \descSet{\gen}{K}{P}$ if and
	only if~$L= K$.
\end{definition}

To unpack the equality $\descSet{\gen}{L}{P} =
\descSet{\gen}{K}{P}$, recall
$
    \descSet{\gen}{K}{P} := \{\left(p, \gen(K,p) \right)\}_{p \in P}.
$
Thus, $\descSet{\gen}{L}{P} =
\descSet{\gen}{K}{P}$ if and only if for all $p \in P$, we have
$\gen(K,p)=\gen(L,p)$. Then, $\gen(K,P)$ is faithful if and only if 
\begin{equation}
    \bigcap_{p \in P} \left \{K' \subset \R^d \mid \gen(K', p) = \gen(K,
    p)\right\}
    = \{K\}.
\end{equation}
From this perspective, we prove the following lemma providing a sufficient
condition for \emph{finite} faithful sets.

\begin{lemma}[Sufficient Conditions for Finite Faithful Set Existence]\label{lem:finiteset}
    Let $\simComp$ be a simplicial complex immersed in~$\R^d$ and let $\gen$ be a
    type of topological descriptor that can faithfully represent~$K$. Suppose
    there exists a finite set of descriptors of type $\gen$ that is faithful
    for~$K_0$. Then, there exists a finite faithful set of descriptors of type
    $\gen$ that is faithful for~$K$.
\end{lemma}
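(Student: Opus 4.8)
The plan is to build the finite faithful set for $K$ in two stages, using the finite faithful set for $K_0$ to pin down vertices and the (possibly infinite) faithful representation of $K$ to separate the remaining candidates. Write $P_0$ for the given finite parameter set with $\descSet{\gen}{K_0}{P_0}$ faithful for $K_0$, and let $P^\ast$ be a parameter set (guaranteed to exist, though possibly infinite) for which $\descSet{\gen}{K}{P^\ast}$ is faithful for $K$. The key finiteness input is elementary: a simplicial complex on a fixed vertex set $V$ is determined by a subset of the finite set of possible simplices on $V$, so there are only finitely many simplicial complexes whose vertex set is $K_0$.

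First I would argue that matching $K$ on the parameters in $P_0$ forces a competitor to share the vertex set of $K$. Concretely, suppose $L$ is a simplicial complex immersed in $\R^d$ with $\gen(L,p) = \gen(K,p)$ for every $p \in P_0$; the goal of this step is to conclude $L_0 = K_0$. This is where the hypothesis on $K_0$ is essential: the descriptors indexed by $P_0$ are precisely those that distinguish the vertex set $K_0$ from every other simplicial complex, and I would invoke the structural relationship between the descriptor of a filtered complex and the descriptor of its $0$-skeleton, namely that the lowest-dimensional topological content recorded by $\gen(K,p)$ agrees with $\gen(K_0,p)$. Agreement of $\gen(L,p)$ with $\gen(K,p)$ on $P_0$ then transfers to agreement of the $0$-skeleton descriptors and hence, by faithfulness of $P_0$ for $K_0$, yields $L_0 = K_0$.

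With the vertex set pinned down, only finitely many candidates remain. Let $\mathcal{L}$ be the finite collection of simplicial complexes on vertex set $K_0$ other than $K$ itself. For each $L \in \mathcal{L}$ we have $L \neq K$, so faithfulness of $\descSet{\gen}{K}{P^\ast}$ supplies a parameter $p_L \in P^\ast$ with $\gen(L,p_L) \neq \gen(K,p_L)$; collect these into the finite set $P_1 := \{\, p_L \mid L \in \mathcal{L}\,\}$. I then claim $P := P_0 \cup P_1$ is the desired finite faithful set for $K$. Indeed, if $L$ matches $K$ on all of $P$, then matching on $P_0$ gives $L_0 = K_0$ by the previous step, so $L \in \mathcal{L} \cup \{K\}$; and if $L \in \mathcal{L}$, then matching at $p_L \in P_1$ contradicts $\gen(L,p_L) \neq \gen(K,p_L)$. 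Hence $L = K$, and $P$ is finite since both $P_0$ and $P_1$ are.

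The main obstacle is the vertex-pinning step of the second paragraph: justifying that equality of the full-complex descriptors on $P_0$ forces equality of the $0$-skeleton descriptors. For verbose descriptor types this should be immediate, since every vertex contributes a recorded feature, but for concise types zero-persistence information is discarded, so I expect to need either an earlier structural result relating $\gen(K,p)$ to $\gen(K_0,p)$ or an argument specific to how each descriptor type encodes its lowest-dimensional data. Everything else, namely the finiteness of complexes on a fixed vertex set and the extraction of finitely many separating parameters from $P^\ast$, is routine.
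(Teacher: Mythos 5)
Your skeleton is the same as the paper's proof: use $P_0$ to cut the candidate pool down to the finitely many simplicial complexes built over the vertex set $K_0$, then use faithfulness of the (possibly infinite) set for $K$ to extract one separating parameter per surviving candidate $L \neq K$, and take the union. Your final set $P_0 \cup P_1$ is in fact cleaner than what the paper writes: its proof sets $P^* = P \cup \{p_L \mid L \in B\}$ and asserts finiteness ``since $P$ and $B$ are finite,'' which is evidently a slip for $P_0$, since $P$ is the possibly infinite faithful set; your version is what is meant.

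The one divergence is the vertex-pinning step, and the obstacle you flag dissolves under the reading of the hypothesis that the paper actually uses. You read ``a finite set of descriptors faithful for $K_0$'' as $\descSet{\gen}{K_0}{P_0}$ being faithful in the sense of \defref{faithful} applied to the complex $K_0$ on its own, and you then need the bridge that $\gen(K,p)$ determines $\gen(K_0,p)$. You are right to be suspicious of that bridge: for concise types it is false, not merely hard. For the single-edge complex of \lemref{PDneqAECC}, $\pd(K,s)$ records exactly one event in every direction $s$, while $\pd(K_0,s)$ for the two-vertex $0$-skeleton records two, so agreement of full-complex concise descriptors on $P_0$ cannot transfer to agreement of $0$-skeleton descriptors. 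The paper sidesteps this entirely: its proof takes $P_0$ to be a finite parameter set such that $\descSet{\gen}{K}{P_0}$ --- the descriptors of the full complex $K$ --- is faithful for $K_0$, i.e., the hypothesis is that any $L$ matching $K$'s own descriptors on $P_0$ already satisfies $L_0 = K_0$ (this is also how such hypotheses are discharged downstream, e.g., via \cite[Lemma 4]{belton2019reconstructing} in the proof of \lemref{ABCleqAPD}). With that reading, the conclusion $L_0 = K_0$ in your second paragraph is definitional and needs no structural result relating $\gen(K,p)$ to $\gen(K_0,p)$; the remainder of your argument then goes through verbatim and coincides with the paper's.
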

\begin{proof}
    Let $P$ be a parameter set such that  $\descSet{\gen}{K}{P}$ that is
    faithful for~$K$, and let $P_0$ be a finite parameter set such that
    $\descSet{\gen}{K}{P_0}$ that is faithful for $K_0$.
    Let $B$ be the set of simplicial complexes that are indistinguishable from
    $K$ using only parameter set~$P_0$; that is,
    \begin{equation}
        B := \bigcap_{p \in P_0} \left \{K' \subset \R^d \mid \gen(K', p) = \gen(K,
        p)\right\}.
    \end{equation}
    Since~$P_0$ is faithful for $K_0$, we know that $B \subseteq \{K' \mid K'_0 = K_0\}$,
    i.e., $B$ is a subset of all simplicial complexes built out of the vertices of
    $K$. In particular, we note that this set is finite; since $n_0$ is
    finite, there are a finite number of simplicial complexes we can build over
    this set of vertices.

    If $B=\{ K\}$, we are done.  Otherwise,
    since $D(K,P)$ is faithful for $K$,
    for each~$L\neq K$ in~$B$, there exists some~$p_L
    \in P$
    such that $\gen(L, p_L) \neq \gen(K, p_L)$.
    Let~$P^*=P \cup \{ p_L \mid L \in B \}$.
    Then,
    $\descSet{\gen}{K}{P^*}$ faithfully represents~$K$.
    Furthermore, since $P$ and $B$ are finite, we
    also know that $P^*$ is finite.
\end{proof}

%--------------Descriptors----------
\section{Six Common Descriptor Types}\label{sec:descriptors}
\label{ss:standard}
The set we partially order is the strength equivalence
classes of six popular descriptor types, which we define here.  We
begin with concise persistence diagrams.

\begin{definition}[Concise Persistence Diagram]
    Let~$f: \simComp \to \R$ be a filter function.
    For $k \in \N$, the~\emph{$k$-dimensional concise persistence
    diagram} is:
    \begin{align*}
        \pd_k^f:=  \big\{ &(i,j)^{\mu^{(i,j)}}  \text{ s.t. }
        (i,j) \in \overline{\R}^2 \\
        &\text{ and }
        \mu^{(i,j)}=\beta^{i,j-1}_k - \beta^{i,j}_k - \beta^{i-1,j-1}_k +
        \beta^{i-1,j}_k \big\},
    \end{align*}
    where~$\overline{\R} = \R \cup \{\pm \infty \}$ and~$(i,j)^m$ denotes $m$
    copies of the point~$(i,j)$.
    The \emph{concise persistence diagram} of $f$, denoted $\pd^f$, is the
    indexed union of all $k$-dimensional concise persistence diagrams~$\pd^f:=
	\cup_{k \in \N} \pd_{k}^{f}$.
\end{definition}
Since simplices can appear at the same parameter value in a filtration,
not all cycles are represented in the concise persistence diagram.  However, having
every simplex ``appear'' in a topological descriptor is helpful, in addition to
being natural.  Thus, we introduce \emph{verbose} descriptors, which contain
this information. We define verbose descriptors via compatible index
filtrations; by Lemma 52 and Corollaries 54-55
of~\cite{fasy2019reconstructing}, this is well-defined and
independent of
our choice of compatible index filtration. We begin with \emph{verbose persistence diagrams}:

\begin{definition}[Verbose Persistence Diagram]\label{def:apd}
    Let $f: \simComp \to \R$ be a filter for $K$, and let $f'$ be a
    compatible index filter. For $k \in \N$, the \emph{$k$-dimensional
    verbose persistence diagram} is the following multiset:
    \begin{equation*}\label{eqn:apd-pts}
        \apd_{k}^{f} := \left\{ \left(f(\sigma_i), f(\sigma_j) \right)
        \text{ s.t. } {(i,j) \in \pd_{k}^{f'}} \right\}.
    \end{equation*}
    The \emph{verbose persistence diagram} of $f$, denoted~$\apd^f$, is the
    indexed union of all $\apd_{k}^{f}$.
\end{definition}

Recording invariants other than homology leads to other
topological descriptor types; recording Betti numbers gives us concise or
verbose
\emph{Betti functions}.

\begin{definition}[Betti Functions]
    Let~$f \colon \simComp \to \R$ be a filter function.
    The~\emph{$k$th concise Betti function},
    $\bc_k^f: \R \to \Z$, is
    defined by
    \begin{equation*}\label{eqn:bc}
        \bc_k^f(t) := \beta_k \left( f^{-1}(-\infty,t]\right).
    \end{equation*}
    The indexed collection of such functions for all dimensions,~$\bc^f:= \{
        \bc_k^f ~|~ k \in \N \}$, is the \emph{concise Betti function}.

    Let $f'$ be an index
    filter compatible with filter function $f$.
    We call~$\sigma \in \simComp$ \emph{positive} (respectively,
    \emph{negative}) \emph{for~$\beta_k$} if the inclusion of $\sigma$ into the index
    filtration of $f'$ increases (resp., decreases)
    $\beta_k$. We denote the positive (resp., negative) simplices by $K_k^+
    \subseteq K_k$
    (and~$K_{k+1}^- \subseteq K_{k+1}$).
    Then, the~\emph{$k$th verbose
    Betti function},~$\abc_k^{f}: \R \to \Z^2$, is defined by
    \begin{align*}
        \label{eqn:abc-pts}
        \abc_k^{f}(p)
        := \bigg(&\left|\{\sigma \in K_k^+  \text{ s.t. } f(\sigma) \leq p\}\right|,\\
                 & \left|\{\sigma \in K_{k+1}^- \text{ s.t. } f(\sigma) \leq p\}\right|
           \bigg).
    \end{align*}
    The collection of verbose Betti number
    functions for each dimension is known as the \emph{verbose Betti function}
    and is denoted~$\abc^f$.
\end{definition}

If we record \emph{Euler characteristic} in a filtration, we obtain concise or
verbose \emph{Euler characteristic functions}.\footnote{Euler characteristic
functions and Betti functions are sometimes called Euler (characteristic) curves
or Betti curves.}

\begin{definition}[Euler Characteristic Functions]\label{def:aecc}
    Let $f \colon \simComp \to \R$ be a filter function.
    The \emph{concise Euler characteristic
    function},
    $\stecc^f: \R \to \Z$, is
    defined by:
    \begin{equation*}\label{eqn:ecc}
	    \steccdec{p}{f} := \chi \left( \{f^{-1}(-\infty,p]\} \right).
    \end{equation*}

    Let $f'$ be an index filter compatible with $f$.
    We call~$\sigma \in \simComp$ \emph{even} (respectively,
    \emph{odd}) if the dimension of $\sigma$ is even (resp., odd).
    Denoting the set of
    even (resp., odd) simplices by~$E$ (and~$O$),
    the \emph{verbose Euler
    characteristic function},~$\augecc^f: \R \to \Z^2$, is
    defined~by
    \begin{equation*}\label{eqn:aecc-pts}
        \augeccdec{p}{f} := \left(
                | \sigma \in E  \text{ s.t. } f(\sigma) \leq p |,
                | \sigma \in O  \text{ s.t. } f(\sigma) \leq p |
    	    \right).
    \end{equation*}
    In other words, $\aecc^f$ represents $\ecc^f$ as a parameterized count of
    even- and odd-dimensional simplices.
\end{definition}
In each of the descriptor types above, we drop the superscript $f$ when it is
clear from context.
See \appendref{six} for examples of these descriptor types.
While concise descriptors may feel more familiar, verbose descriptors are not
new.  Many algorithms for computing persistence (e.g.,~\cite[Chapter
VII]{edelsbrunner2010computational}), explicitly compute events with trivial
lifespan.  In~\cite{PHFunc}, the
definition of persistence diagrams agrees with our \defref{apd}.  Verbose
descriptors are closely connected to the charge-preserving morphisms
of~\cite{fasy2022persistent,mccleary2022edit}.  In~\cite{usher2016persistent},
verbose persistence is defined via filtered chain
complexes;~\cite{chacholski2023decomposing, memoli2022stability,
memoli2023ephemeral,zhou2023beyond} also take this view as a foundational
definition. The behavior of verbose versus concise descriptors is
explored in~\cite{fasy2018euler, memoli2023ephemeral,zhou2017exploring}.

Verbose (concise) descriptors are sometimes called augmented (non-augmented,
respectively) in the literature. We refer to points on a verbose
diagram with zero-lifespan as
\emph{instantaneous}. Such points correspond to length-zero barcodes in a
verbose barcode, which are sometimes referred to as \emph{ephemeral}.

While we chose the six descriptor types above due to
their relevance in applications, we emphasize that \defref{descriptors}
is very general. We explore
a few pathological descriptor types in \appendref{other-descriptors}.

\camera{BTF: there is a bit of a disconnect between the descriptors being really
general and the equiv classes being for representing PHT / parameterized by sets
of directions. we have somewhere discussed the discretization of the PHT. it
might be worth to discuss things like that.  but, maybe that should be for the
journal version}

\section{Relating Descriptor Types}\label{sec:relations}
We now develop tools to compare descriptor types, by
comparing the sizes of faithful sets.
We note here that~$|\descSet{\gen}{K}{P}| = |P|$, and for brevity of notation,
we elect to write the later.
Given a topological descriptor type~$D$ and simplicial complex $K$ immersed in
$\R^d$, we denote the infimum size of faithful sets for $K$ as
\begin{equation*}
    \Gamma(K,D):= \inf_{D(K,P) \text{ faithful}}
        \left\{ \lvert  P \rvert \right\}.
\end{equation*}
Intuitively, the stronger~$D$ is, the smaller~$\Gamma(K,D)$.  Often, we
find~$\Gamma(K,D)$ is finite.  For some descriptors and~$K$, we
find~$\Gamma(K,D)=\aleph_0$ (the cardinality of~$\N$) or~$\Gamma(K,D)=\aleph_1$
(the cardinality of $\R$); see \appendref{other-descriptors} for examples.  If
no faithful set of type $D$ exists for~$K$, we write~$\inf_{x \in \emptyset} \{
    x \} = \aleph_{\top}$, and we think of this as ``the highest''
cardinality.\footnote{We hope any discomfort caused by our use of
$\aleph_{\top}$ will be outweighed by the benefit of being able to avoid
lengthy and awkward case analyses.}
By the axiom of choice, $\aleph_0 < \aleph_1$; see
e.g.,~\cite[Ch.~2]{jech1981set}.  Thus, we have a total order on possible
values of $\Gamma(K,D)$:
\begin{equation*}
    c < \aleph_0 < \aleph_1 < \aleph_{\top},
\end{equation*}
where $c \in \N$.

\begin{definition}[Strength Relation]
    \label{def:equal}\label{def:preceq}
    Let $A$ and $B$ be two
    topological descriptor types. If, for every simplicial
    complex $K$ immersed in $\R^d$, we have~$\Gamma(K,A)\geq \Gamma(K,B)$, then we say that $A$ is
    \emph{weaker} than $B$ (and~$B$ is \emph{stronger} than $A$)
    denoted $\st{A} \preceq \st{B}$.
    If $\st{A} \preceq \st{B}$ and $\st{B} \preceq \st{A}$, then we say that $A$
    and $B$ have equal strength, denoted~$[A]=[B]$.
\end{definition}

The relations $=$ are $\preceq$ are well-defined on strength equivalence
classes. 

\begin{lemma}\label{lem:well-defined} \label{lem:preceqwelldef}
    The relation $=$ is an equivalence relation, and
    the relation $\preceq$ is well-defined on sets of strength equivalence classes.
\end{lemma}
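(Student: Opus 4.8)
The plan is to verify the two claims separately, since they concern different relations. The statement has two parts: that $=$ (equal strength) is an equivalence relation on descriptor types, and that $\preceq$ is well-defined on the induced equivalence classes.

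First I would establish that $=$ is an equivalence relation. Recall from \defref{equal} that $[A]=[B]$ means $\st{A}\preceq\st{B}$ and $\st{B}\preceq\st{A}$, where $\st{A}\preceq\st{B}$ means $\Gamma(K,A)\geq\Gamma(K,B)$ for every simplicial complex $K$ immersed in $\R^d$. The three axioms follow by reducing everything to the total order on cardinalities $c<\aleph_0<\aleph_1<\aleph_\top$. For \emph{reflexivity}, note $\Gamma(K,A)\geq\Gamma(K,A)$ holds trivially for all $K$, so $\st{A}\preceq\st{A}$, hence $[A]=[A]$. For \emph{symmetry}, the definition of $=$ is manifestly symmetric in $A$ and $B$, so $[A]=[B]$ immediately gives $[B]=[A]$. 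For \emph{transitivity}, suppose $[A]=[B]$ and $[B]=[C]$. Then for every $K$ we have $\Gamma(K,A)\geq\Gamma(K,B)\geq\Gamma(K,C)$ and the reverse chain, and transitivity of $\geq$ on the totally ordered value set yields $\st{A}\preceq\st{C}$ and $\st{C}\preceq\st{A}$, i.e.\ $[A]=[C]$.

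Next I would show $\preceq$ is well-defined on equivalence classes, meaning the truth of $\st{A}\preceq\st{B}$ does not depend on which representatives of $[A]$ and $[B]$ we choose. Concretely, I would assume $[A]=[A']$, $[B]=[B']$, and $\st{A}\preceq\st{B}$, and deduce $\st{A'}\preceq\st{B'}$. Unwinding the definitions, for every $K$ we have $\Gamma(K,A')\geq\Gamma(K,A)$ (from $[A]=[A']$, using the direction $\st{A}\preceq\st{A'}$), then $\Gamma(K,A)\geq\Gamma(K,B)$ (hypothesis), then $\Gamma(K,B)\geq\Gamma(K,B')$ (from $[B]=[B']$). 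Chaining these via transitivity of $\geq$ gives $\Gamma(K,A')\geq\Gamma(K,B')$ for all $K$, i.e.\ $\st{A'}\preceq\st{B'}$.

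I do not anticipate a genuine obstacle here; the result is essentially formal bookkeeping once one observes that $\Gamma(K,D)$ takes values in a totally ordered set, so that $\geq$ is reflexive and transitive. The only point requiring minor care is that $\preceq$ on descriptor \emph{types} is not antisymmetric (two distinct types can have equal strength), which is precisely why one passes to equivalence classes; I would remark that on the classes $\preceq$ becomes a genuine partial order, but the lemma as stated asks only for well-definedness, so the antisymmetry-on-classes observation can be stated as an immediate consequence rather than belabored.
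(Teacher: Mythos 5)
Your proof is correct and takes essentially the same route as the paper's: both reduce everything to the fact that $\Gamma(K,D)$ takes values in the totally ordered set $\N \cup \{\aleph_0, \aleph_1, \aleph_{\top}\}$, on which $\leq$ is reflexive, antisymmetric, and transitive. The paper compresses this into a few lines and leaves the representative-independence check implicit, whereas you spell it out explicitly ($[A]=[A']$, $[B]=[B']$, chaining $\Gamma(K,A')\geq\Gamma(K,A)\geq\Gamma(K,B)\geq\Gamma(K,B')$), which is the same argument written in full detail.
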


\begin{proof}
    When we compare infimums in \defref{preceq}, we compare values in $\N
    \cup \{\aleph_0, \aleph_1, \aleph_{\top}\}$.  The
    relation~$\leq$ on
    values in this set is reflexive, antisymmetric, and transitive. The
    relation $=$ on this set is reflexive, symmetric, and
    transitive. The result follows.
\end{proof}

See \exref{sameclass} of \appendref{other-descriptors} for two
different descriptor types in the same equivalence class.

We write $\st{A} \prec \st{B}$ if $\st{A}
\preceq \st{B}$ and~$\st{A}
\neq \st{B}$.
That is, if $\st{A} \preceq \st{B}$ and there
exists a simplicial complex for which the minimum faithful set of
type $B$ is strictly smaller than that of type $A$, or for
    which there exists a faithful set of type $B$ but not of type $A$.
Descriptor types need not be comparable;
see \lemref{incomparable} of \appendref{other-descriptors}.

We conclude this
section by defining reduction of one descriptor to another and show
this is a valid strategy for determining equivalence class order.

\begin{definition}[Reduction]\label{def:reduce}
    Let $A$ and $B$ be two topological descriptor types.
    We say
    $B$ is \emph{reducible} to $A$ if, for all simplicial complexes $K$ and any
    filtration $f$ of $K$, we can compute $\dirDesc{A}{f}$ from
    $\dirDesc{B}{f}$ alone.\footnote{In this reduction, we assume
    the real-RAM model of computation.}
\end{definition}

Intuitively, $B$ is at least as informative as
$A$.
More formally, we have the following lemma:
\begin{restatable}{lemma}{reduce}\label{lem:reduce}
    Let $A$ and $B$ be two topological descriptor types. If $B$ is
    reducible to $A$, we have $\st{A} \preceq \st{B}$.
\end{restatable}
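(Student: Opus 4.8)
The plan is to use reducibility to transfer faithful parameter sets from $A$ to $B$: I will show that every parameter set that is faithful for $K$ under $A$ is also faithful under $B$, so that the infimum defining $\Gamma(K,B)$ ranges over a collection containing the one defining $\Gamma(K,A)$, forcing $\Gamma(K,B) \leq \Gamma(K,A)$ for every $K$.

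First I would make the reduction concrete. Since $B$ is reducible to $A$, there is a procedure that outputs $\dirDesc{A}{p}$ from $\dirDesc{B}{p}$ for every simplicial complex and every filtration; I would record this as a single map $\phi$ satisfying $A(K,p) = \phi\bigl(B(K,p)\bigr)$ for all $K$ and all parameters $p$. The feature I would stress is that $\phi$ is \emph{input-independent}: it is the same map regardless of $K$ and $p$, since \defref{reduce} requires the reduction to hold for all complexes and filtrations simultaneously. This uniformity is exactly what lets equalities of $B$-descriptors propagate to equalities of $A$-descriptors.

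Next I would establish the transfer claim: if $\descSet{A}{K}{P}$ is faithful for $K$, then so is $\descSet{B}{K}{P}$. Let $L$ be any simplicial complex with $\descSet{B}{L}{P} = \descSet{B}{K}{P}$, i.e., $B(L,p) = B(K,p)$ for every $p \in P$. Applying $\phi$ to both sides gives $A(L,p) = \phi\bigl(B(L,p)\bigr) = \phi\bigl(B(K,p)\bigr) = A(K,p)$ for each $p \in P$, so $\descSet{A}{L}{P} = \descSet{A}{K}{P}$; faithfulness of the $A$-set then forces $L = K$, as required. Consequently, the collection of parameter sets that are faithful for $K$ under $A$ is contained in the collection that is faithful under $B$.

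Finally I would compare the infima. For sets of cardinals $S \subseteq T$ under the convention $\inf \emptyset = \aleph_{\top}$, one has $\inf T \leq \inf S$, since enlarging the index set can only lower the infimum and $\aleph_{\top}$ is the maximal value. Applying this to the containment from the transfer claim yields $\Gamma(K,B) \leq \Gamma(K,A)$ for every $K$, including the boundary case where $A$ admits no faithful set for $K$ (then $\Gamma(K,A) = \aleph_{\top}$ and the inequality is immediate). By \defref{preceq}, this is precisely $\st{A} \preceq \st{B}$. I expect the only real subtlety to be conceptual rather than computational: namely, extracting from \defref{reduce} a genuinely input-independent map $\phi$, so that agreement of $B$-descriptors on a parameter set forces agreement of the corresponding $A$-descriptors. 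Once that uniformity is pinned down, the remainder is the short containment-of-infima argument above, with only the $\aleph_{\top}$ convention needing care.
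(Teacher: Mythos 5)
Your proposal is correct and takes essentially the same approach as the paper: both arguments show that any parameter set $P$ faithful for $K$ under $A$ remains faithful under $B$, so the collection of faithful parameter sets for $A$ is contained in that for $B$, and the comparison of infima (with the $\aleph_{\top}$ convention covering the empty case) gives $\Gamma(K,B) \leq \Gamma(K,A)$ for every $K$. The only difference is that you spell out, via the input-independent map $\phi$, the transfer step that the paper asserts in a single sentence; that elaboration is accurate but does not change the argument.
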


\begin{proof}
    Let $K$ be a simplicial complex.  Define the sets~
    \begin{align*} 
        W_A:= \{P \text{   s.t.   }
        \descSet{A}{K}{P} \text{ is faithful}\} \quad \text{   and   } \quad
        W_B:= \{P \text{   s.t.   }
    \descSet{B}{K}{P} \text{ is faithful}\}.
    \end{align*}
    For each~$P \in W_A$, by definition,~$\descSet{A}{K}{P}$ is faithful.
    Since~$B$ is reducible to~$A$, this also means $\descSet{B}{K}{P}$ is
    faithful, and so~$P$ is also in $W_B$.  Hence,
    $W_A \subseteq W_B$. Hence,~$\inf_{P \in W_A} |P|  \geq \inf_{P \in W_B} |P| $.
    Note that these are exactly the infimums in
    \defref{preceq}, and so, we have~$\st{A} \preceq \st{B}$.
\end{proof}

%----------------Total Ordering----------------
\section{A Proof of Partial Order}\label{sec:ordering}
In this section, we provide a partial order on the
six topological descriptors of \ssecref{standard}.
While the results and definitions of previous sections were general,
we now focus on descriptors corresponding to lower-star
filtrations.

By simple reduction arguments,
we immediately have the following lemma.

\begin{restatable}{lemma}{twogroups}\label{lem:twogroups}
    $\st{\ecc} \preceq \st{\bc} \preceq \st{\pd}$ and $\st{\aecc} \preceq
    \st{\abc} \preceq \st{\apd}$.
\end{restatable}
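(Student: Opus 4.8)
The plan is to invoke \lemref{reduce} four times. Since that lemma says that if $B$ is reducible to $A$ then $\st{A} \preceq \st{B}$, each $\preceq$ in the two chains follows as soon as I exhibit the appropriate reduction between the corresponding pair of descriptor types. So the whole statement reduces to establishing four pointwise-computable relationships, all of which stem from classical identities tying together Euler characteristic, Betti numbers, and persistence.

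First I would dispatch the concise chain $\st{\ecc} \preceq \st{\bc} \preceq \st{\pd}$. For $\st{\ecc} \preceq \st{\bc}$, I show $\bc$ is reducible to $\ecc$: for each sublevel set the Euler characteristic is the alternating sum of Betti numbers, so $\ecc^f(t) = \sum_{k \in \N} (-1)^k \bc_k^f(t)$, which recovers $\ecc^f$ from $\bc^f$ pointwise in $t$ (the sum is finite since $K$ is finite-dimensional). For $\st{\bc} \preceq \st{\pd}$, I show $\pd$ is reducible to $\bc$: the value $\bc_k^f(t)$ equals the number of points $(i,j) \in \pd_k^f$ with $i \leq t < j$, i.e., the number of bars alive at $t$, so counting the points of the concise diagram that straddle $t$ recovers $\bc_k^f(t)$ for every $k$ and $t$.

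Next comes the verbose chain $\st{\aecc} \preceq \st{\abc} \preceq \st{\apd}$, which runs parallel but requires the positive/negative bookkeeping. For $\st{\abc} \preceq \st{\apd}$, I show $\abc$ is reducible to $\apd$: in the index pairing, each finite point of $\apd_k^f$ has birth coordinate $f(\sigma)$ for a positive $k$-simplex $\sigma$ and death coordinate $f(\tau)$ for the negative $(k+1)$-simplex $\tau$ that kills it, while essential classes contribute a birth with infinite death; hence the first coordinate of $\abc_k^f(p)$ is the count of births $\leq p$ and the second is the count of finite deaths $\leq p$, both readable from $\apd$. For $\st{\aecc} \preceq \st{\abc}$, I show $\aecc$ is reducible to $\abc$: every $k$-simplex is either positive for $\beta_k$ or negative for $\beta_{k-1}$, so the number of $k$-simplices with $f(\sigma) \leq p$ equals the first coordinate of $\abc_k^f(p)$ plus the second coordinate of $\abc_{k-1}^f(p)$; summing these counts over even $k$ and over odd $k$ yields the two coordinates of $\augeccdec{p}{f}$, recovering $\aecc$.

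The concise identities are textbook, so the main obstacle is the verbose step $\st{\aecc} \preceq \st{\abc}$: I must argue carefully that the positive-simplex and negative-simplex counts, which $\abc$ packages across consecutive dimensions, repartition exactly into the even/odd simplex counts of $\aecc$. The crux is the dichotomy that each simplex is counted exactly once as either positive or negative, together with verifying that the index-filter-based notions of positive/negative and of even/odd are compatible so the dimensional reindexing pulling $K_k^-$ from $\abc_{k-1}$ lines up correctly. The independence of the verbose descriptors from the choice of compatible index filter, cited after \defref{apd}, is what guarantees all of these reductions are well-defined.
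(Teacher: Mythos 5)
Your proposal is correct and matches the paper's proof in approach: the paper likewise applies \lemref{reduce} with the reductions $\pd \to \bc$ (forgetting birth--death pairings, i.e., counting bars alive at each height) and $\bc \to \ecc$ (alternating sum), noting the verbose chain is ``nearly identical.'' Your write-up simply makes explicit the counting identities---including the repartition of positive/negative simplices into even/odd counts for $\st{\aecc} \preceq \st{\abc}$---that the paper leaves to the reader.
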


\begin{proof}
    The proof follows directly from a reduction argument. We can reduce any
    $\dirDesc{\pd}{\dir}$  to~$\dirDesc{\bc}{\dir}$ by ``forgetting'' the
    relationship between birth and death events. We can then reduce
    $\dirDesc{\bc}{\dir}$ to~$\dirDesc{\ecc}{\dir}$ by taking the alternating
    sum of points from $\dirDesc{\bc}{\dir}$. A nearly identical
    argument shows the relationship between verbose versions of
    these descriptors.
\end{proof}

The reductions described above are well-known, and are observed in other
work; for example~\cite[Prop 4.13]{curry2022many} points out the reduction from
a \pdname to an \eccname.

We also use reduction to order a class of
a concise descriptor type and its verbose counterpart.

\begin{restatable}{lemma}{concise}\label{lem:concise}
    $\st{\ecc} \preceq \st{\aecc}$, $\st{\bc} \preceq \st{\abc}$, and $\st{\pd}
	\preceq \st{\apd}$.
\end{restatable}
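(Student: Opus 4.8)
The plan is to derive all three inequalities from \lemref{reduce}. Recall that to conclude $\st{A} \preceq \st{B}$ it suffices to exhibit a reduction of $B$ to $A$, i.e., to compute $\dirDesc{A}{f}$ from $\dirDesc{B}{f}$ on an arbitrary filtration $f$. In each of our three pairs the concise descriptor plays the role of $A$ and its verbose counterpart plays the role of $B$, so it suffices to recover each concise descriptor from the verbose one evaluated on the same filtration. All three reductions are pointwise-in-$p$ arithmetic, so the bulk of the work is simply identifying the correct formula and citing the standard fact that makes it valid.

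For the Euler characteristic functions, I would observe that $\augeccdec{p}{f}$ records, at each parameter $p$, the pair consisting of the number of even-dimensional simplices and the number of odd-dimensional simplices with $f$-value at most $p$. Since the Euler characteristic of a complex is the number of even-dimensional simplices minus the number of odd-dimensional simplices, $\steccdec{p}{f}$ is obtained by subtracting the second coordinate of $\augeccdec{p}{f}$ from its first. This is a reduction of $\aecc$ to $\ecc$, so $\st{\ecc} \preceq \st{\aecc}$. For the Betti functions, $\abc_k^f(p)$ records the pair (number of positive $k$-simplices, number of negative $(k+1)$-simplices) with $f$-value at most $p$. The key fact I would invoke is the standard simplex-pairing identity: at every stage of a filtration, $\beta_k$ equals the number of positive $k$-simplices present minus the number of negative $(k+1)$-simplices present, because each negative $(k+1)$-simplex destroys exactly one $k$-dimensional class previously created by a positive $k$-simplex. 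Hence $\bc_k^f(p)$ is the difference of the two coordinates of $\abc_k^f(p)$, giving a reduction of $\abc$ to $\bc$ and thus $\st{\bc} \preceq \st{\abc}$.

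For the persistence diagrams, I would use that $\apd_k^f$ is, by \defref{apd}, the multiset obtained by sending each index pairing $(i,j) \in \pd_k^{f'}$ to $(f(\sigma_i), f(\sigma_j))$, whereas $\pd_k^f$ records exactly the off-diagonal multiplicities $\mu^{(i,j)}$. Every instantaneous point of $\apd_k^f$ arises from a pairing with $f(\sigma_i) = f(\sigma_j)$ and contributes nothing to the concise diagram, while every off-diagonal pairing is recorded with matching multiplicity in both objects (this multiplicity agreement is exactly what the cited Lemma~52 and Corollaries~54--55 of~\cite{fasy2019reconstructing} guarantee). Thus deleting all diagonal points from $\apd^f$ recovers $\pd^f$, a reduction of $\apd$ to $\pd$, giving $\st{\pd} \preceq \st{\apd}$.

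The step I expect to be the main obstacle is the persistence diagram reduction: unlike the other two cases, which are transparent coordinate subtractions, here one must be certain that discarding the diagonal points of the verbose diagram recovers the concise diagram \emph{with the correct multiplicities}, rather than merely the correct support. This is where I would lean carefully on the well-definedness results of~\cite{fasy2019reconstructing} to justify that the off-diagonal part of the verbose diagram coincides as a multiset with the concise diagram; the even-minus-odd and positive-minus-negative identities underlying the other two reductions are comparatively routine.
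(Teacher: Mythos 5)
Your argument is essentially identical to the paper's proof: the paper establishes all three inequalities by exhibiting exactly these reductions---deleting the on-diagonal points of a verbose persistence diagram, and subtracting second coordinates from first coordinates for verbose Betti and Euler characteristic functions---and then invoking \lemref{reduce}. Your extra care about multiplicity agreement in the $\apd$-to-$\pd$ reduction (via the well-definedness results of~\cite{fasy2019reconstructing}) only makes explicit a point the paper's terser proof leaves implicit, so the two proofs coincide in substance.
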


\begin{proof}
    Each verbose descriptor has a clear reduction to its concise counterpart. A
	verbose persistence diagram becomes a concise persistence diagram
        by removing all on-diagonal
	points. Verbose Betti functions and verbose Euler characteristic
	functions become concise if we subtract their second coordinates from
	their first coordinates.  Then, by \lemref{reduce}, we have the desired
	relations.
\end{proof}

Next, we see that no concise class is equal to a verbose class.

\begin{restatable}{lemma}{PDneqAECC}\label{lem:PDneqAECC}
    For $\gen \in \{\ecc, \bc, \pd\}$ and $\auggen \in \{\aecc, \abc, \apd\}$,
    we have $\st{\gen} \neq \st{\auggen}$.
\end{restatable}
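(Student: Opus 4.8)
The plan is to separate the classes with a single witness complex, exploiting that concise descriptors see only the (triangulation-invariant) homotopy type of sublevel sets, whereas verbose descriptors record every simplex. To establish $\st{\gen} \neq \st{\auggen}$ it suffices, by \defref{preceq}, to exhibit one simplicial complex $K$ with $\Gamma(K,\auggen) < \Gamma(K,\gen)$: this alone yields $\st{\auggen} \not\preceq \st{\gen}$, so the two classes cannot coincide. I will produce a single $K$ for which \emph{every} concise type fails to be faithful (so $\Gamma(K,\gen) = \aleph_{\top}$) while \emph{every} verbose type is faithful (so $\Gamma(K,\auggen) \leq \aleph_1 < \aleph_{\top}$), thereby handling all nine pairs at once.

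For the witness, take $K$ to be a single filled $d$-simplex together with all its faces, and let $L$ be obtained by inserting one generic interior vertex $p$ and coning, so that $L$ subdivides $K$ into smaller simplices. A generic choice of $p$ keeps $L$ in general position, and by construction $|L| = |K|$. The crux is that $K$ and $L$ have identical concise descriptors in every direction $s \in \S^{d-1}$: the concise persistence diagram of a lower-star filtration computes the sublevel-set persistence of the height function $x \mapsto s\cdot x$ on the common space $|K| = |L|$, which is independent of the triangulation, while $p$ contributes only an instantaneous (diagonal) event that the concise diagram discards. Agreement of $\pd$ then forces agreement of $\bc$ and $\ecc$ through the reductions underlying \lemref{twogroups}. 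Since $L \neq K$ gives $\descSet{\gen}{L}{P} = \descSet{\gen}{K}{P}$ for \emph{every} parameter set $P \subseteq \S^{d-1}$, no faithful set of any concise type exists, so $\Gamma(K,\gen) = \aleph_{\top}$ for each $\gen \in \{\ecc,\bc,\pd\}$.

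On the verbose side, the vertex $p$ and the new subdivision simplices are counted: for instance the even/odd simplex totals recorded by $\aecc$ at the top filtration value differ between $K$ and $L$, so verbose descriptors separate them. More strongly, the verbose Euler characteristic transform over all of $\S^{d-1}$ is faithful for $K$, since its even- and odd-dimensional counts recover the heights of every vertex in every direction (hence all vertex positions) together with the simplex structure. Thus $\Gamma(K,\aecc) \leq \aleph_1$, and since $\st{\aecc} \preceq \st{\abc} \preceq \st{\apd}$ by \lemref{twogroups} we obtain $\Gamma(K,\auggen) \leq \aleph_1 < \aleph_{\top}$ for each $\auggen \in \{\aecc,\abc,\apd\}$. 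Combining both sides gives $\Gamma(K,\auggen) < \Gamma(K,\gen)$ for every pair, which is exactly what is required.

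The main obstacle I anticipate is the justification in the second paragraph that the concise descriptors of $K$ and $L$ agree in every direction. This rests on the standard but nontrivial fact that the lower-star filtration of a PL height function computes the sublevel-set persistence of that function on the geometric realization, so the resulting concise diagram depends only on $(|K|, s)$ and not on the triangulation; I would either cite this or verify it via the level-wise homotopy equivalence between each lower-star subcomplex $f_s^{-1}(-\infty,t]$ and the geometric sublevel set $\{x \in |K| : s\cdot x \leq t\}$. A secondary point to pin down is faithfulness of the verbose Euler characteristic transform for $K$; as $\aecc$ is the weakest verbose type, establishing it and invoking \lemref{twogroups} suffices for all three verbose descriptors.
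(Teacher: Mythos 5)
Your route is genuinely different from the paper's, and its concise half is sound. The paper separates the classes with a minimal witness: a single edge in $\R^2$, for which it checks by hand that the two verbose descriptors in directions $e_1,e_2$ form a faithful set, while any concise faithful set needs at least three descriptors (for this complex, each concise descriptor records exactly one event, at the height of the lowest vertex, so each direction pins down only one of the three relevant coordinates). You instead take a filled $d$-simplex $K$ and its cone subdivision $L$ by a generic interior point, and conclude $\Gamma(K,\gen)=\aleph_{\top}$ for every concise type. This works: your observation that a generic interior cone point preserves \assref{mostgeneral} is precisely the move unavailable to the paper's appendix argument (\lemref{trivial}), where barycentric subdivision of an edge creates collinear triples and so requires dropping general position; and the triangulation-invariance of concise descriptors follows from the standard homotopy equivalence between lower-star sublevel complexes and geometric sublevel sets, which you correctly flag and which is citable. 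As a byproduct you prove something stronger than the paper states: under general position, no complex containing a top-dimensional simplex admits \emph{any} concise faithful set. Your separation ($\aleph_{\top}$ versus $\leq \aleph_1$) is more dramatic than the paper's ($\geq 3$ versus $2$), though less quantitative, and like the paper's it handles all nine pairs with one witness.

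The genuine weak point is the verbose half. Your parenthetical justification that $\aecc$ over $\S^{d-1}$ ``recovers the heights of every vertex in every direction'' is not immediate: verbose Euler characteristic functions conflate vertices with all other even-dimensional simplices. For the filled triangle in a generic direction with vertex heights $h_a<h_b<h_c$, the even events occur at heights $\{h_a,h_b,h_c,h_c\}$ (the $2$-simplex duplicates the top vertex height), so vertex heights cannot simply be read off. The claim you actually need --- that $\aecc(K,\S^{d-1})$ is faithful for the filled $d$-simplex --- is true but requires an argument, e.g.: the lowest even event in each direction is necessarily a vertex, so across all directions it determines $s\mapsto\min_v s\cdot v$, hence the convex hull and its extreme points; the fixed even/odd totals, together with the impossibility of an extra vertex whose linear height function matches $s\mapsto\max_i s\cdot v_i$, then rule out all competitors. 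You flag this step yourself, but as written it is the one real gap, and without it the strict inequality $\Gamma(K,\auggen)<\Gamma(K,\gen)$ is unproved. A cheaper patch, closer in spirit to the paper, is to forgo full-transform faithfulness entirely and verify a \emph{finite} verbose faithful set directly for a small witness; the paper does exactly this, checking $\descSet{\apd}{K}{\{e_1,e_2\}}$ (and its $\abc$ and $\aecc$ analogues) by hand for the single-edge complex.
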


\begin{proof}
	To show inequality of strength classes, we find a simplicial complex
	for which minimum faithful sets of type $\gen$ and $\auggen$ have
	different cardinalities.  Let $K$ be the simplicial complex that is a
	single edge in $\R^2$ with vertex coordinates $(1,1)$ and $(1,2)$. See
	\figref{verboseneqconcise} for this complex, and an illustration for
        the specific case $\auggen = \apd$. In the direction $e_1=(1,0)$, if $\auggen
	= \apd$, we see an instantaneous birth/death and an infinite birth in
	degree zero. If~$\auggen = \abc$, we see two positive simplices and one
	negative simplex for Betti zero. If~$\auggen = \aecc$, we see two even
	simplices and one odd simplex. This all occurs at height 1,
	and there are no other events, which is only explainable by the
	presence of a single edge.  From $\dirDesc{\auggen}{e_2}$, we see a
	non-instantaneous and instantaneous event at heights 1 and 2,
	respectively, which give us the~$y$-coordinates of our two vertices.
	Then, $\simComp$ is the only complex that could have generated both
	$\dirDesc{\auggen}{e_1}$ and~$\dirDesc{\auggen}{e_2}$, i.e., the set
	$\descSet{\auggen}{K}{\{e_1, e_2\}}$ is faithful.

    Next, consider the descriptor type $\gen$.  For any $\dir \in \sph^1$,
	$\dirDesc{\gen}{s}$ contains exactly one event; if the
	lowest vertex of $K$ with respect to $s$ has height $a$ in
	direction $s$, then $\dirDesc{\gen}{s}$ records a change in
	homology/Betti number/Euler characteristic at height $a$ and records no
	other changes.
	Thus, $\dirDesc{\gen}{s}$ can only give us information about one
	coordinate of the vertex set of $K$ at a time, corresponding to
	whichever vertex is lowest in direction $s$. However,~$\simComp$ has
	three relevant coordinates; namely, $x=1$,~$y=1$, and~$y=2$, meaning it
	is not possible for any faithful set of type $\gen$ to have size less
	than three. Thus, since $3 \neq 2$, we have shown $\st{\gen} \neq
	\st{\auggen}$, as desired.
\end{proof}

\begin{figure}[h!]
    \centering
	\includegraphics[scale=.45]{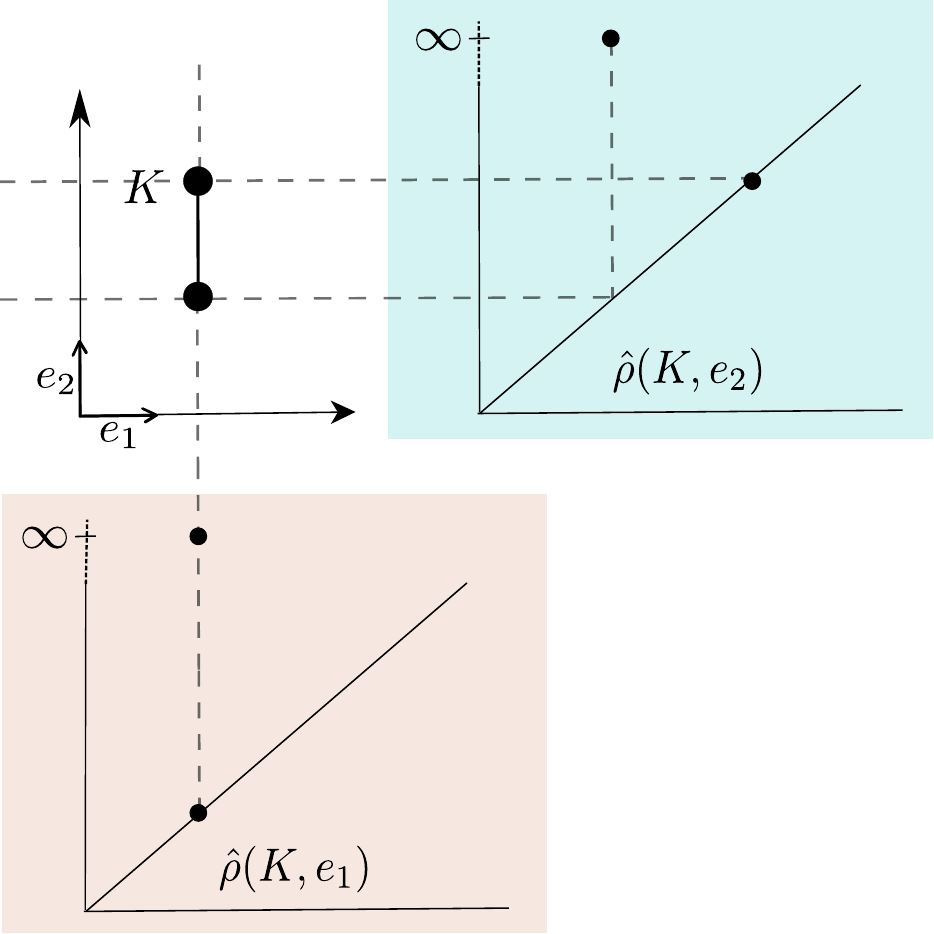}
	\caption{The simplicial complex considered in the proof of
	\lemref{PDneqAECC} as well as the verbose diagrams in directions~$e_1$
	and $e_2$. Note that $\apd(K,\{e_1, e_2\})$ is indeed a faithful set,
	since we can recover the coordinates of both vertices ($\apd(K, e_1)$
	tells us the $x$-coordinates and $\apd(K, e_2)$ tells us the
	$y$-coordinates), as well as determine there is only a single edge
	present (there is only one instantaneous zero-dimensional point in each
	verbose diagram). The concise versions of these diagrams do not have
	on-diagonal points, and each only contain a single point
	at~$\infty$. This is true for concise diagrams corresponding to
        any direction.
	}
\label{fig:verboseneqconcise}
\end{figure}

The specific inequality $\st{\ecc} \neq \st{\apd}$ is also implied by
\cite[Thm. 10]{mickaPhD}.

\begin{figure}[h!]
    \centering
    \begin{subfigure}[b]{0.3\textwidth}
        \centering
        \includegraphics[scale=.6]{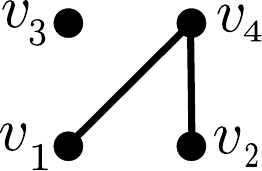}
	    \subcaption{$\simComp$}
	    \label{subfig:abcvapdK}
    \end{subfigure}
    \begin{subfigure}[b]{0.3\textwidth}
        \centering
        \includegraphics[scale=.6]{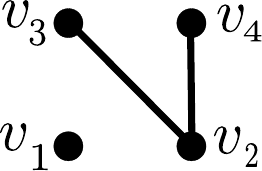}
	    \subcaption{$\simComp'$}
	    \label{subfig:abcvapdKprime}
    \end{subfigure}
	\caption{Complexes used in the proof
        of \lemref{ABCleqAPD}.}
\label{fig:abcvapd}
\end{figure}

In~\cite{fasy2022efficient}, faithfulness is shown via knowing the dimension of
event rather than any birth-death pairings, so the resulting faithful sets of
verbose persistence diagrams and verbose Betti functions
have equal cardinality for every simplicial complex. One
might wonder, then, if $\st{\apd}$ equals~$\st{\abc}$.  However, this is an
incorrect leap; faithful sets of~\cite{fasy2022efficient} are almost certainly
not minimal.  The next lemma gives an instance where birth-death pairings
matter, rather than just an event's existence or dimension.

\begin{lemma}\label{lem:ABCleqAPD}
    $\st{\abc} \prec \st{\apd}$.
\end{lemma}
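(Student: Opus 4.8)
Since \lemref{twogroups} already gives the inequality $\st{\abc}\preceq\st{\apd}$, the only thing left to establish is strictness, i.e.\ $\st{\abc}\neq\st{\apd}$. By \defref{preceq} it suffices to exhibit a single simplicial complex for which $\apd$ admits a strictly smaller faithful set than $\abc$, so the plan is to produce a complex $\simComp$ with $\Gamma(\simComp,\apd)<\Gamma(\simComp,\abc)$. The structural reason such a gap must come from \emph{pairing} is this: in a fixed direction $\dir$, the verbose Betti function remembers only the two marginal multisets per dimension---the heights of the positive simplices and the heights of the negative simplices---whereas the verbose persistence diagram additionally remembers which birth is matched to which death. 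I would therefore build the two complexes $\simComp$ and $\simComp'$ of \figref{abcvapd} so that, across the directions that matter, they realize identical marginals but genuinely different birth--death pairings.

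Concretely, I would take $\simComp$ and $\simComp'$ on a common vertex set, differing by a swap of edges (or of a filling $2$-simplex) chosen so that for each relevant direction $\dir$ the multiset of positive-simplex heights and the multiset of negative-simplex heights agree between the two complexes, while the elder-rule (resp.\ higher-dimensional) pairing differs. For such $\dir$ this yields $\dirDescLong{\abc}{\simComp}{\dir}=\dirDescLong{\abc}{\simComp'}{\dir}$ but $\dirDescLong{\apd}{\simComp}{\dir}\neq\dirDescLong{\apd}{\simComp'}{\dir}$. The upper bound $\Gamma(\simComp,\apd)\le|P|$ is then a finite, direction-by-direction check in the style of \lemref{PDneqAECC}: I exhibit an explicit small set of directions $P$, recover the vertex coordinates from the birth/death heights seen across $P$, and use the pairing recorded in the diagrams to exclude $\simComp'$ together with every other imposter on that vertex set, certifying that $\descSet{\apd}{\simComp}{P}$ is faithful.

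The crux is the matching lower bound $\Gamma(\simComp,\abc)>|P|$: I must show that no set of $|P|$ directions is $\abc$-faithful for $\simComp$, by producing, for every candidate set $Q$ with $|Q|\le|P|$, a complex $\simComp_Q\neq\simComp$ with $\dirDescLong{\abc}{\simComp}{\dir}=\dirDescLong{\abc}{\simComp_Q}{\dir}$ for all $\dir\in Q$; the partner $\simComp'$ and its symmetric variants are the intended confusers, since they duplicate $\simComp$'s height marginals. This is where the real difficulty sits. Note that \cite{fasy2022efficient} guarantees $\Gamma(\simComp,\abc)$ is finite, so $\simComp$ and $\simComp'$ cannot agree under $\abc$ in \emph{every} direction---the pairing must leak into the marginals somewhere---and the argument is thus inherently quantitative: the vertex geometry has to be arranged so that the edge swap preserves the two height marginals across a family of directions large enough to defeat every size-$|P|$ choice $Q$, while the few directions that actually resolve the pairing (and hence separate the complexes under $\abc$) are ones such a small set can be forced to miss. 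Once both bounds are in place, $\Gamma(\simComp,\apd)\le|P|<\Gamma(\simComp,\abc)$ gives $\st{\abc}\neq\st{\apd}$, which with \lemref{twogroups} yields $\st{\abc}\prec\st{\apd}$.
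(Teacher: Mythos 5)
Your skeleton coincides with the paper's: the paper also takes the edge-swap pair of \figref{abcvapd} (four vertices at the corners of a unit square, $K$ with edges $[v_1,v_4]$, $[v_2,v_4]$ and $K'$ with the edges swapped onto $v_3$), proves $\Gamma(K,\apd)\leq 2$ via the explicit set $\{e_1,e_2\}$, and then shows $\Gamma(K,\abc)>2$. Your upper-bound plan and your diagnosis that the gap must come from birth--death pairing are both right. The genuine gap is in your lower bound. You propose to defeat every two-direction set $Q$ using ``the partner $\simComp'$ and its symmetric variants'' as confusers, arranged so that ``the edge swap preserves the two height marginals across a family of directions large enough to defeat every size-$|P|$ choice $Q$.'' This cannot work: for a generic direction $s$, the $\abc_1$ marginals of $K$ and $K'$ already differ, since the edge-height multisets are $\{\max(s\cdot v_1, s\cdot v_4),\, \max(s\cdot v_2, s\cdot v_4)\}$ versus $\{\max(s\cdot v_1, s\cdot v_3),\, \max(s\cdot v_2, s\cdot v_3)\}$, and these agree only for special directions (essentially $\pm e_1$, $\pm e_2$ in the paper's configuration). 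So an adversarial $Q$ containing a generic direction distinguishes $K$ from $K'$ and from all its symmetric variants, and your family of confusers is exhausted. Your own observation that $\Gamma(K,\abc)$ is finite already signals this tension, but you did not resolve it.

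What the paper does instead --- and what your proposal is missing --- is a two-pronged case analysis with \emph{two different kinds} of imposters. If $\{s_1,s_2\}$ is not (up to sign) $\{e_1,e_2\}$, then some $s_i$ sees the four vertices at more than two distinct heights, and since $\abc$ records only height marginals with no cross-direction pairing, the vertex \emph{locations} themselves cannot be recovered: the confuser here is a complex with relocated vertices, not an edge swap (see \subfigref{pairings}{toomanypairings} and \cite[Lemma 4]{belton2019reconstructing}; the paper also checks that $\abc_1$ adds nothing, since edge heights only confirm the top vertex of each edge). Only in the remaining case, $s_1\in\{\pm e_1\}$ and $s_2\in\{\pm e_2\}$, does the edge-swap confuser $K'$ enter, and there the marginals coincide exactly, e.g.\ $\abc(K,e_1)=\abc(K',e_1)$. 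Without the first case your argument cannot rule out generic direction pairs, so as stated the crux of the proof fails; supplying that vertex-ambiguity argument is precisely the missing step.
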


\begin{proof}
    We know by \lemref{twogroups} that $\st{\abc} \leq \st{\apd}$. We must
    show that equality does not hold; that is, that there exists a simplicial complex for
    which the cardinality of the minimal faithful sets of \abcnames and \apdnames
    differ.
    Consider the simplicial complex $K$ in $\R^2$ consisting of: four
    vertices~$v_1 = (0,0),
    v_2 = (0,1), v_3 = (1,0)$, and $v_4 = (1,1)$ and two edges $[v_1, v_4]$
    and $[v_2, v_4]$, as in \subfigref{abcvapd}{abcvapdK}.

    Let $e_1=(1,0)$ and $e_2=(0,1)$.
        We first claim that $\dirDescLong{\apd}{K}{\{e_1, e_2\}}$ is faithful,
        meaning $\Gamma(K, \apd) \leq 2$.  Such diagrams uniquely identify the
        vertex set of $K$ by~\cite[Lemma 4]{belton2019reconstructing}; we provide further
        details here. From $\apd(e_1)$, we know
	$K$ has four vertices, two
	with~$x$-coordinate 0, and two with~$x$-coordinate 1. Similarly,
        from $\apd(e_2)$, we know two of the four vertices
        have $y$-coordinate 0 and two have $y$-coordinate 1.
	There are exactly four ways to pair our~$x$-
	and~$y$-coordinates, so we know the locations of each
        vertex.  See \subfigref{pairings}{correctpairings}.

    For the edges, we first note that
    from~$\dirDesc{\apd}{e_2}$ in degree zero we see an
    instantaneous birth/death
    at height one as well as a connected component born at
    height zero that dies at height one, so we know $K$ has exactly
    two edges with  height one in direction $e_2$.
    Namely, we know we have either the edges~$[v_1, v_3]$ and~$[v_2, v_3]$,
    or~$[v_1, v_4]$ and~$[v_2, v_4]$, i.e., we have one of the two
    complexes shown in \figref{abcvapd}.
    Because $\dirDesc{\apd}{e_1}$ sees
    two
    zero-dimensional births at height zero with an infinite lifespan, we
    know there is no edge from~$v_1$ to~$v_3$.
    Finally, since higher homology is trivial, we know there are no other
    simplices and have determined $K$ exactly; thus, we have a
    faithful set of size two.

    We next show that $\Gamma(K,\abc)>2$.
    Suppose, by way of contradiction, that $s_1$ and~$s_2$ are
    two directions such that~$\descSet{\abc}{K}{\{s_1, s_2\}}$ is a
    faithful set.  We first show, without loss of generality,~$s_1 \in
    \{e_1, -e_1\}$ and~$s_2 \in \{e_2, -e_2\}$. Suppose this is not the
    case.  Because~$s_1 = -s_2$ does not correspond to a faithful
    set, we assume (without loss of generality) that~$s_1 \neq -s_2$.  Then, at least one of $s_1$
    or~$s_2$ sees the vertices of $K$ at more than two distinct heights;
    see \subfigref{pairings}{toomanypairings}.
    In
    order to know the precise coordinates of each vertex, we need to
    correctly pair heights in directions $s_1$ and~$s_2$. However,
    since at least one of $s_1$ or $s_2$ reports more than two distinct
    heights, we have more than four possible pairings (see
    also~\cite[Lemma~4]{belton2019reconstructing}). We claim it is not possible to find the
    four correct pairings.  The degree-zero information~$\abc_0(s_1)$
    and~$\abc_0(s_2)$ alone is insufficient, as it only tells us the
    heights of vertices. From $\abc_1$, we know the height of edges, which
    only confirms the height of the top vertex and that there is some
    vertex below, information we already had from~$\abc_0$.
    Thus, we must
    have~$s_1 \in \{e_1, -e_1\}$ and $s_2 \in
    \{e_2, -e_2\}$.  However, for each of these four directions, the
    associated verbose Betti function is not able to distinguish the two
    complexes shown in
    \subfigref{abcvapd}{abcvapdKprime}. For instance,
    both~$\dirDescLong{\abc}{K}{e_1}$ and $\dirDescLong{\abc}{K'}{e_1}$ see two
    vertices at height zero, and two vertices and two edges at height one,
    i.e.,~$\abc(K, e_1) = \abc(K', e_1)$. The other cases
    of~$s_1$ and~$s_2$ are similar.
    Thus, we have found a faithful set of verbose persistence diagrams
    with cardinality two, but have
    shown any faithful set of verbose Betti functions must have cardinality
    greater than
    two.
\end{proof}

\begin{figure}[h!]
    \centering
    \begin{subfigure}[b]{0.3\textwidth}
        \centering
        \includegraphics[scale=.6]{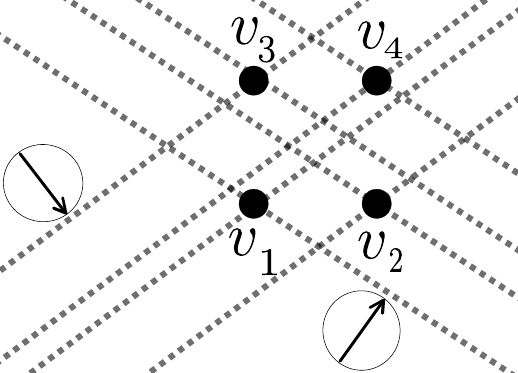}
        \subcaption{}
	    \label{subfig:toomanypairings}
    \end{subfigure}
    \begin{subfigure}[b]{0.3\textwidth}
        \centering
        \includegraphics[scale=.6]{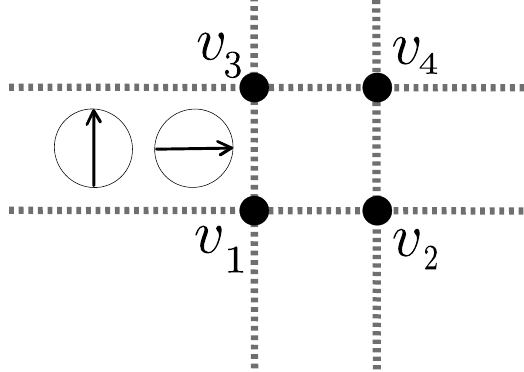}
            \subcaption{}
	    \label{subfig:correctpairings}
    \end{subfigure}
	\caption{For the given vertex set, heights of
        filtration events in the indicated directions are shown as
	dashed grey lines. While we know the number of vertices on each
        line, for two directions not both in $\{\pm e_1, \pm e_2\}$, as
        in~\figpartref{toomanypairings},
        we can not identify vertex locations. Only
        when choosing one each of $\pm e_1$ and $\pm e_2$, as
        in~\figpartref{correctpairings}, is the set of
	vertices satisfying these constraints unique.
        }
\label{fig:pairings}
\end{figure}

Combining results, we arrive at our main theorem.

\begin{theorem}[Partial Ordering]
    \label{thm:order}
    The partial order of strength classes of topological descriptor types shown
    in \figref{abstract} is correct.
\end{theorem}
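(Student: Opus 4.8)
The plan is to read this theorem as the consolidation of the preceding lemmas: \figref{abstract} depicts a partial order on the six strength classes, so the proof must verify both that every relation the diagram draws is proven and correctly marked strict, and that every relation it omits genuinely fails, so that the picture is neither too coarse nor too fine. First I would list the comparabilities, the strict separations, and the incomparable pairs the diagram asserts, then match each to a lemma or supply a missing witness.

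For the comparabilities, the two horizontal chains $\st{\ecc} \preceq \st{\bc} \preceq \st{\pd}$ and $\st{\aecc} \preceq \st{\abc} \preceq \st{\apd}$ are \lemref{twogroups}, and the vertical links $\st{\ecc} \preceq \st{\aecc}$, $\st{\bc} \preceq \st{\abc}$, $\st{\pd} \preceq \st{\apd}$ are \lemref{concise}. I would then close under transitivity; this is legitimate because $\preceq$ is just $\leq$ on $\N \cup \{\aleph_0, \aleph_1, \aleph_{\top}\}$, whose transitivity is recorded in the proof of \lemref{preceqwelldef}. This step already certifies that $\st{\ecc}$ lies below, and $\st{\apd}$ above, every other class.

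Next I would promote the drawn edges to strict relations. Combining each vertical $\preceq$ of \lemref{concise} with the class inequality $\st{\gen} \neq \st{\auggen}$ of \lemref{PDneqAECC} yields $\st{\gen} \prec \st{\auggen}$ for every concise/verbose pair; in particular the three vertical links are strict, as is every comparable concise-to-verbose pair. The one strict separation internal to the verbose column, $\st{\abc} \prec \st{\apd}$, is exactly \lemref{ABCleqAPD}. At this point every positively-asserted relation of the diagram is accounted for.

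The hard part is certifying the incomparabilities that the diagram's absent edges encode, namely that $\st{\bc}$ and $\st{\pd}$ are unrelated to $\st{\aecc}$ and that $\st{\pd}$ is unrelated to $\st{\abc}$. One direction of each is free: the single-edge complex of \lemref{PDneqAECC}, on which $\Gamma(K, \auggen) = 2 < 3 = \Gamma(K, \gen)$, shows at once that no verbose class lies at or below any concise class, ruling out $\st{\aecc} \preceq \st{\bc}$, $\st{\aecc} \preceq \st{\pd}$, and $\st{\abc} \preceq \st{\pd}$. The obstacle is the opposite direction, $\st{\bc} \not\preceq \st{\aecc}$, $\st{\pd} \not\preceq \st{\aecc}$, and $\st{\pd} \not\preceq \st{\abc}$: each needs a complex on which a concise descriptor is strictly more efficient than an incomparable verbose one, i.e.\ $\Gamma(K, \gen) < \Gamma(K, \auggen)$, which the single-edge example cannot provide. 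I expect building these witnesses --- complexes whose pairing structure is captured cheaply by a concise diagram while parity or Betti counts force the verbose descriptor to spend extra directions --- to be the main obstacle; once they are in hand, the comparabilities, strict separations, and incomparabilities together are precisely the partial order of \figref{abstract}.
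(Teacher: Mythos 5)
Your positive half coincides exactly with the paper's proof: the theorem is stated with the one-line justification ``combining results,'' i.e., the two chains of \lemref{twogroups}, the three vertical relations of \lemref{concise}, strictness of every comparable concise/verbose pair obtained by pairing those relations with the inequality $\st{\gen} \neq \st{\auggen}$ of \lemref{PDneqAECC}, the strict inequality $\st{\abc} \prec \st{\apd}$ of \lemref{ABCleqAPD}, and closure under transitivity (licensed, as you note, by the order structure recorded in \lemref{preceqwelldef}). That assembly is correct and is precisely what the paper intends.

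Where you part ways with the paper is in reading \figref{abstract} as \emph{also} asserting that every relation it omits genuinely fails, which saddles you with witnesses $\Gamma(K,\bc) < \Gamma(K,\aecc)$, $\Gamma(K,\pd) < \Gamma(K,\aecc)$, and $\Gamma(K,\pd) < \Gamma(K,\abc)$ that you correctly admit you cannot construct --- so under your own reading your proof is incomplete. But that reading overshoots the theorem's claim: the figure summarizes the \emph{established} relations (a partial order the six classes are shown to satisfy), not a claim of incomparability for undrawn pairs. The Discussion confirms this, explicitly leaving open whether $\st{\ecc}$, $\st{\bc}$, $\st{\pd}$ are related by strict inequalities or equalities (``Non-equality/equality of concise descriptors remains an area active of research''); if absent edges encoded incomparability or even distinctness, the theorem would contradict that open status, and the cross-column separations you seek are likewise open, not merely omitted. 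So drop the incomparability obligation; what remains of your argument is complete and is the paper's proof. Your one genuinely useful addition is the observation that the single-edge complex of \lemref{PDneqAECC}, with $\Gamma(K,\auggen) = 2 < 3 \leq \Gamma(K,\gen)$, already rules out $\st{\auggen} \preceq \st{\gen}$ for all nine cross pairs at once --- a sharper negative statement than the paper records.
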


%------------Bounds-----------------------
\section{Bounds on Faithful Sets}\label{sec:bounds}
Here, we provide lower bounds on the size of faithful sets of the six
descriptor types of \ssecref{standard}.

\subsection{Concise Descriptor Bounds}\label{ss:concise}
A defining feature of concise descriptors is that there are not generally
events at every vertex height in a filtration.
The closer a feature is to coplanar,
the smaller the range of directions that can detect it becomes
(\hspace{1sp}\cite[Sec. 4]{fasy2018challenges} explores this specifically for Euler
characteristic functions). Difficulty detecting the presence or absence of
structures near to the
same affine subspace puts greater restrictions on the ability of concise
descriptors to form faithful sets.  We use the following definition
to help this claim be precise.

\begin{definition}[Simplex Envelope]\label{def:envelopes}
    Let $\simComp$ be a simplicial complex in $\R^d$, let $\sigma \in
	\simComp$, and let~$S \subseteq \S^{d-1}$.
    Then, we define the \emph{envelope of $\sigma$}, denoted $\E^S_\sigma$,
as the intersection
    of (closed) supporting halfspaces
\[
\E_\sigma^S = \bigcap_{s \in S} \{p \in \R^d \mid s \cdot p \geq \min_{v \in
    \sigma}(s \cdot v)\}.
\]
    If $S$ is clear from context, we write $\E_{\sigma}$. By the \emph{dimension}
    of $\E_\sigma$, we mean the largest dimension of ball that can be contained
    entirely in $\E_\sigma$.
\end{definition}

See \figref{envelope}.
\begin{figure}[h!]
    \centering
    \includegraphics[width=.35\textwidth]{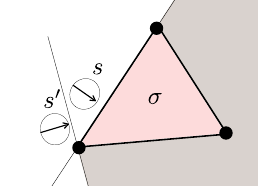}
    \caption{
        Given $S = \{s, s'\}$, the envelope for $\sigma$ is the grey and pink
        shaded regions. 
    }
\label{fig:envelope}
\end{figure}
Since $\E_\sigma^S$ is an intersection of convex regions, it is itself
convex. Furthermore, with respect to each $s \in S$, the height of
each point of $\sigma$ is greater than or equal to its minimum vertex, so
$\E_{\sigma}^S$ contains~$\sigma$.
\begin{remark}
        The simplex envelopes of \defref{envelopes} have connections to
        well-studied topics such as convex cones, support functions, etc.
        See~\cite{darrotto2013convex, panik2013fundamentals}. In
        particular,~\cite[Thm 3.1.1, Cor 3.1.2]{panik2013fundamentals}
        establish that a simplex envelope
        corresponding to the entire sphere of directions is the simplex itself.
\end{remark}
We use simplex envelopes to define a necessary condition for
concise descriptors to form a faithful set.

\begin{restatable}[Envelopes for Faithful Concise Sets]{lemma}{envelopes}\label{lem:envelopes}
    Let $\simComp$ be a simplicial complex immersed in~$\R^d$, let~$D \in \{\ecc,
    \bc, \pd\}$, and let $S \subseteq \sph^{d-1}$ so that
    $\descSet{\gen}{K}{S}$ is faithful. Then, for any maximal simplex~$\sigma$
    in $\simComp$, the dimension of $\E_\sigma$ equals the dimension of
    $\sigma$.
    % then $\sigma$ is
   % contained in a $k$-dimensional component of $\partial \E_\sigma$.
\end{restatable}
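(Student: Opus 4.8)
The plan is to prove the nontrivial half of the equality. Since $\sigma \subseteq \E_\sigma$ and $\sigma$ is a $k$-simplex (write $k := \dim \sigma$), the envelope already contains a $k$-ball, so $\dim \E_\sigma \geq k$ for free --- this is exactly the containment noted before the statement. It remains to rule out $\dim \E_\sigma > k$, and I would do this by contraposition: assuming $\descSet{\gen}{\simComp}{S}$ is faithful, I show that $\dim \E_\sigma > k$ for a maximal simplex $\sigma$ forces the existence of a distinct complex with identical descriptors in every direction of $S$, contradicting faithfulness.

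First I would build the competitor complex. Because $\E_\sigma$ is convex, contains $\sigma$, and has dimension strictly larger than $k = \dim \mathrm{aff}(\sigma)$, its affine hull strictly contains $\mathrm{aff}(\sigma)$; hence some point of $\E_\sigma$ lies off $\mathrm{aff}(\sigma)$, and by convexity I may choose such a point $w \in \E_\sigma \setminus \mathrm{aff}(\sigma)$ arbitrarily close to $\sigma$ and in general position with respect to $\simComp_0$ (so $w \notin \simComp_0$). I then set $K' := \simComp \cup (w * \sigma)$, the result of coning $\sigma$ off to $w$ together with all of its new faces. Here maximality of $\sigma$ is what lets me take $w$ near the \emph{exposed} simplex $\sigma$ so that $w * \sigma$ meets $\simComp$ only along $\sigma$ and its faces, keeping $K'$ a legitimate complex immersed in $\R^d$; and since $w \notin \simComp_0$ we have $K' \neq \simComp$.

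The heart of the argument is to show $\descSet{\gen}{K'}{S} = \descSet{\gen}{\simComp}{S}$. Fix $s \in S$ and abbreviate the sublevel complexes $F(t) := f_s^{-1}(-\infty,t] \cap \simComp$ and $F'(t) := f_s^{-1}(-\infty,t] \cap K'$. The only simplices of $K'$ outside $\simComp$ are the cone faces containing $w$, and I claim $F'(t)$ simplicially collapses onto $F(t)$ for every $t$. If $s \cdot w > t$, then $w$ is absent and $F'(t) = F(t)$. If $s \cdot w \leq t$, then the envelope condition $w \in \E_\sigma$ gives $s \cdot w \geq \min_{v \in \sigma} s \cdot v$, so the lowest vertex of $\sigma$ is already present; writing $\sigma_t$ for the (nonempty) face of $\sigma$ spanned by the vertices of height at most $t$, the new part of $F'(t)$ is exactly the cone $w * \sigma_t$, attached to $F(t)$ along $\sigma_t$. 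Since every coface in $K'$ of a simplex containing $w$ again contains $w$ (as $w \notin \simComp_0$), the standard collapse of the simplex $w * \sigma_t$ onto its face $\sigma_t$ may be carried out inside $F'(t)$ without disturbing $F(t)$, giving $F'(t) \searrow F(t)$. Consequently each inclusion $F(t) \hookrightarrow F'(t)$ is a homotopy equivalence, and being a literal inclusion of subcomplexes it commutes with the filtration maps $F(t) \hookrightarrow F(t')$ and $F'(t) \hookrightarrow F'(t')$. These assemble into a natural transformation of persistent homology modules that is a levelwise isomorphism, so the modules are isomorphic in each degree; thus $\pd(\simComp, s) = \pd(K', s)$, and the Betti and Euler characteristic functions agree as well, being read off from $F(t) \simeq F'(t)$. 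As this holds for all $s \in S$, the faithful set fails to separate $\simComp$ from $K' \neq \simComp$, a contradiction; hence $\dim \E_\sigma = \dim \sigma$.

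I expect the main obstacle to be the coherence in the last step: verifying not merely that $F'(t)$ and $F(t)$ are homotopy equivalent for each $t$, but that the equivalences are compatible across the filtration, so the \emph{full} concise persistence diagrams coincide and not just the Betti numbers and Euler characteristics. Phrasing the equivalence as the inclusion-induced natural map is what sidesteps constructing compatible deformation retractions by hand. A secondary technical point is the genericity in the choice of $w$ --- ensuring $K'$ is a valid complex in general position whose new simplices meet $\simComp$ only along $\sigma$ --- which is precisely where the maximality hypothesis on $\sigma$ is used.
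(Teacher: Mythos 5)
Your proposal is correct and takes essentially the same approach as the paper: both note $\dim \E_\sigma \geq \dim \sigma$ from $\sigma \subseteq \E_\sigma$, then derive a contradiction with faithfulness by coning a point of $\E_\sigma$ lying off the affine hull of $\sigma$ onto $\sigma$ to obtain a distinct complex $K'$ whose concise descriptors agree with those of $\simComp$ in every direction of $S$. If anything, your verification of that agreement---the filtered collapses $F'(t) \searrow F(t)$, upgraded to an isomorphism of persistence modules via naturality of the inclusions---is tighter than the paper's informal observation that the two complexes ``cannot differ by more than a connected component birth/death,'' and it shows the paper's extra care in choosing the apex strictly interior to $\E_\sigma$ along a normal vector (to get strict height inequalities) is not actually needed.
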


\begin{proof}
    Let $k$ be the dimension of $\sigma$, and let $c$ be the dimension
	of $\E_\sigma$.  First, we observe that since $\sigma$ is contained in
	$\E_\sigma$, we must have $k \leq c$. The claim is trivial when $k = d$,
	so we proceed with the case~$k < d$ and assume, by way of contradiction,
	that $k < c$.

    We claim that in this case, 1) at every interior point of $\sigma$, there
    is a vector normal to $\sigma$ that ends in the interior of $\E_\sigma$, and
    2) letting $p$ denote the endpoint of such a vector, $p$ is higher than the
    lowest vertex of $\sigma$ with respect to each $s \in S$.

    The first part of the claim, 1), is true since otherwise, $\E_\sigma$ would
    be $k$-dimensional. 2) is true since each halfspace defining $\E_\sigma$
    contains the lowest vertex (or vertices) of $\sigma$ with respect to the
    corresponding direction. Thus, since $p$ is in the interior of $\E_\sigma$,
    it must be higher than this lowest vertex (or vertices) with respect to the
    corresponding direction.

    Now consider the simplicial complex $K'$, defined as having all the
    simplices of $K$ in addition to the simplex formed by taking the geometric join of
	$p$ and $\sigma$,
    i.e., the simplex~$p \ast \sigma$.
    We claim that, for any~$s \in S$, we have~$\descSet{\gen}{K'}{s} =
    \descSet{\gen}{K}{s}$.
    First, we note that since~$p \ast \sigma$ deformation retracts
	onto~$\sigma$, $K'$ has the same homology as $K$. Next, we observe
        that~$\descSet{\gen}{K'}{s}$ and~$\descSet{\gen}{K}{s}$ cannot differ by
	more than a connected component birth/death; higher dimensional
	differences would require more than the join of a point with an
	existing face.

    Finally, since $p$ is higher than the lowest vertex of $\sigma$ with respect
    to any direction $s \in S$, the simplex~$p \ast \sigma \in K'$ does not correspond to
    any connected component birth or death in~$\descSet{\gen}{K'}{s}$ that was
    not present in~$\descSet{\gen}{K}{s}$.  Thus, we have shown
    $\descSet{\gen}{K'}{S} = \descSet{\gen}{K}{S}$.  This contradicts the
    assumption that~$\descSet{\gen}{K}{S}$ is faithful, so we must have~$k = c$.
\end{proof}

\begin{figure}[h!]
    \centering
    \begin{subfigure}[b]{0.3\textwidth}
        \centering
        \includegraphics[width=.9\textwidth]{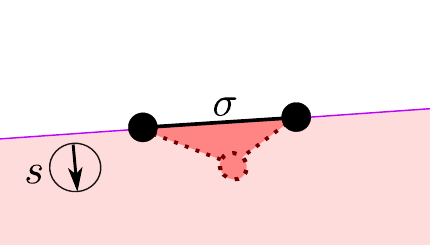}
	    \subcaption{}
        \label{subfig:edge_envelope1}
    \end{subfigure}
    \begin{subfigure}[b]{0.3\textwidth}
        \centering
        \includegraphics[width=.9\textwidth]{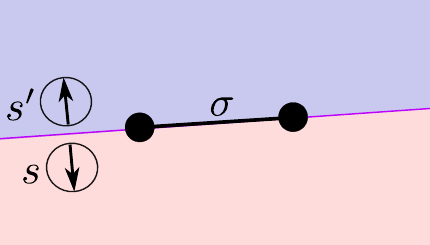}
	    \subcaption{}
        \label{subfig:edge_envelope2}
    \end{subfigure}
	\caption{With only the single direction~$s$
	perpendicular to maximal edge $\sigma$ in $\R^2$,
	the envelope $\E_\sigma^s$ is two-dimensional.
	Then, we could
	place an adversarial two-simplex contained in $\E_\sigma^s$
	that is undetectable by~$D(s)$, for~$D \in \{\ecc,
	\bc, \pd\}$, as in
        \figpartref{edge_envelope1}.
   In \figpartref{edge_envelope2}, the inclusion of $s'$ reduces
    $\E_\sigma^{\{s, s'\}}$ to a linear subspace (purple intersection of pink
    and blue halfspaces) and the adversarial two-simplex would be detected by
    $D(s')$.
    }
\label{fig:3Delbow}
\end{figure}

See \figref{3Delbow} for an example of what might go wrong if a simplex
envelope does not satisfy the conditions of \lemref{envelopes}.
Since we require the envelopes of a $k$-simplex to be
$k$-dimensional, and since envelopes are the intersections of closed half
spaces, standard arguments from manifold theory give us the following.

\begin{corollary}[Concise Descriptors Per Maximal Simplex]\label{cor:perp}
    Let $\simComp$ be a simplicial complex immersed in~$\R^d$, let~$\gen \in \{\ecc, \bc,
    \pd\}$, and let $S \subseteq \sph^{d-1}$.  If $\descSet{\gen}{K}{S}$ is
    faithful, then for each maximal $k$-simplex~$\sigma \in \simComp$ with
    $k<d$, the set $S$ has at least $d-k+1$ directions
    perpendicular to $\sigma$.
    If $k<d-1$, these
    directions are pairwise linearly independent.
\end{corollary}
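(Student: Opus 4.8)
The plan is to derive the statement directly from \lemref{envelopes}, which guarantees that for a maximal $k$-simplex $\sigma$ with $k<d$ the envelope $\E_\sigma = \E_\sigma^S$ is exactly $k$-dimensional. Fix a point $x_0$ in the relative interior of $\sigma$, let $L$ be the $k$-dimensional linear span of $\sigma - x_0$ (the tangent space of $\sigma$), and let $N = L^\perp$ be the normal space, so $\dim N = d-k$. The first step is to identify which defining halfspaces of $\E_\sigma$ are active at $x_0$: writing $c_s = \min_{v\in\sigma} s\cdot v$, the constraint $s\cdot p \geq c_s$ is tight at $x_0$ precisely when $s\cdot x_0 = c_s$, and since $x_0$ is a strictly positive combination of the vertices of $\sigma$, this occurs if and only if all vertices of $\sigma$ have equal height in direction $s$, i.e.\ if and only if $s$ is perpendicular to $\sigma$. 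I write $S_\perp \subseteq S\cap N$ for this set of perpendicular directions.

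The core of the argument is the claim that $S_\perp$ must positively span $N$. I would prove the contrapositive by a perturbation argument. If $S_\perp$ fails to positively span $N$, then its dual cone in $N$ is nontrivial, so there is a nonzero $w\in N$ with $s\cdot w \geq 0$ for every $s\in S_\perp$. For small $t>0$ the point $x_0 + tw$ still satisfies every perpendicular constraint (whose heights only increase along $w$) and every non-perpendicular constraint (which is strict at $x_0$), so $x_0 + tw \in \E_\sigma$; since $w\in N\setminus\{0\}$ this point leaves $\mathrm{aff}(\sigma)$, forcing $\dim\E_\sigma > k$ and contradicting \lemref{envelopes}. Conversely, positive spanning of $N$ pins $\E_\sigma$ into $\mathrm{aff}(\sigma)$, recovering $\dim\E_\sigma = k$. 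Once we know $S_\perp$ positively spans the $(d-k)$-dimensional space $N$, the standard fact that a positively spanning set of $\R^m$ has at least $m+1$ elements gives $|S_\perp|\geq d-k+1$, which is the first conclusion.

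For the linear-independence statement I would pass to a minimal subfamily of $S_\perp$ that still positively spans $N$; when $k<d-1$ we have $m = d-k \geq 2$, and I would invoke the convex-geometry fact that any collection of \emph{exactly} $m+1$ vectors positively spanning $\R^m$ (the minimum possible) is pairwise linearly independent. The quick justification is that such a positive spanning is equivalent to the $m+1$ vectors linearly spanning $\R^m$ together with a strictly positive linear dependence $\sum_i \lambda_i u_i = 0$; if two of them were parallel, the remaining $m$ vectors would be forced to be a basis of $\R^m$, so the (one-dimensional space of) dependences would vanish on those basis vectors, contradicting strict positivity once $m\geq 2$. This also explains the hypothesis $k<d-1$: in the codimension-one case $m=1$ the two required perpendicular directions are antipodal and hence linearly dependent, so no independence is claimed there.

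I expect the main obstacle to be the independence half rather than the count. The perturbation/positive-spanning argument is robust, but the independence claim is sensitive to exactly which perpendicular directions one records: an arbitrary faithful $S$ may contain redundant or antipodal perpendicular directions (for example all of $\pm e_1,\dots,\pm e_m$ inside $N$), so the clean statement is about a \emph{minimal} positively spanning family of size exactly $d-k+1$. The delicate point of the plan is therefore to extract such a minimal family from $S_\perp$ and to confirm the general-position property of minimal positive-spanning configurations of $\R^m$ with $m\geq 2$; this is the ``standard manifold theory'' step that turns the dimension equality of \lemref{envelopes} into the stated pairwise linear independence.
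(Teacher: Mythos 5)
Your dual-cone argument for the counting half is sound, and it is in effect the ``standard argument from manifold theory'' that the paper leaves entirely implicit (the paper offers no proof of \corref{perp} beyond the one-sentence remark preceding it): at an interior point $x_0$ of $\sigma$ the active halfspaces are exactly those with $s$ perpendicular to $\sigma$; \lemref{envelopes} then forces the perpendicular subfamily $S_\perp$ to positively span the normal space $N \cong \R^m$ with $m = d-k$, and a positively spanning set of $\R^m$ has at least $m+1$ elements. One caveat: your ``for small $t>0$'' perturbation uses a uniform positive slack over the non-perpendicular constraints, which is automatic only when $S$ is finite; since $s \mapsto s\cdot x_0 - \min_{v\in\sigma} s\cdot v$ is continuous, $\E_\sigma^S = \E_\sigma^{\overline{S}}$, so for infinite $S$ the envelope only sees the closure of $S$ and the argument needs to be restated accordingly. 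For the finite (in particular, minimal) faithful sets the paper cares about, your first conclusion $|S_\perp| \geq d-k+1$ is correctly proved.

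The genuine gap is in the independence half, and you have correctly located it but not closed it. Your plan is to extract from $S_\perp$ a positively spanning subfamily of size exactly $m+1$ and then apply the (correct, and correctly justified) fact that $m+1$ vectors positively spanning $\R^m$ with $m \geq 2$ are pairwise linearly independent. But inclusion-minimal positively spanning subsets of $\R^m$ can have any cardinality from $m+1$ up to $2m$: for a basis $f_1, \dots, f_m$ of $N$, the family $\{\pm f_1, \dots, \pm f_m\}$ positively spans $N$, is inclusion-minimal, contains no positively spanning subfamily of size $m+1$, and---worse---contains no $m+1$ pairwise linearly independent vectors at all, since a pairwise independent subfamily takes at most one vector from each antipodal pair and so has size at most $m$. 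By your own equivalence, the conclusion of \lemref{envelopes} at $\sigma$ is \emph{exactly} the condition that $S_\perp$ positively spans $N$; the lemma is therefore compatible with $S_\perp = \{\pm f_1, \dots, \pm f_m\}$ and cannot imply the existence of $d-k+1$ pairwise linearly independent perpendicular directions---no rearrangement of your extraction step, nor any argument using only the lemma, can prove the second sentence of the corollary. What your fact does yield is the conditional statement that when $|S_\perp|$ attains the minimum value $d-k+1$, those directions are automatically pairwise independent, which may well be the intended content; but as stated, the independence claim requires input from faithfulness beyond \lemref{envelopes}, which neither you nor the paper supplies. Note that \thmref{highlowbound} later invokes precisely this pairwise independence, so the obstruction you ran into is inherited from the paper's own one-line justification rather than introduced by your proof.
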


\lemref{envelopes} and \corref{perp} each give us the following.

\begin{corollary}[Tight Lower Bound]\label{cor:strictconcise}
Let $\simComp$ be a simplicial complex in $\R^d$, $\gen \in \{\ecc, \bc,
	\pd\}$, and $S \subseteq \sph^{d-1}$. Suppose that
	$\descSet{\gen}{K}{S}$ is faithful. Then, $|S| \geq d+1$, and this bound
	is tight.
\end{corollary}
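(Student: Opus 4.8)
The plan is to prove Corollary~\ref{cor:strictconcise} by deriving the lower bound $|S| \geq d+1$ from the per-simplex requirements of Corollary~\ref{cor:perp}, and then establishing tightness by exhibiting an explicit faithful set of size exactly $d+1$ for an appropriate complex. The lower-bound direction should follow quickly from the preceding results, while the main work will be in the tightness construction.

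**Lower bound.** First I would observe that every simplicial complex immersed in $\R^d$ contains at least one maximal simplex $\sigma$ of some dimension $k$. If $k = d$, then by Assumption~\ref{ass:mostgeneral} the complex spans $\R^d$, and faithfully recovering the vertex set forces many directions; but the clean argument is to handle the generic case $k < d$ via Corollary~\ref{cor:perp}, which already guarantees that $S$ must contain at least $d - k + 1$ directions perpendicular to $\sigma$. Since perpendicularity to a $k$-dimensional simplex is a codimension-$k$ condition, and since these $d-k+1$ directions lie in the $(d-k)$-dimensional orthogonal complement of $\sigma$, I would want to argue that this count alone does not immediately give $d+1$. Instead, the sharper route is to note that faithfulness must also recover the full vertex set of $K$: each direction $s \in S$ reveals only the heights (one coordinate's worth of linear information) of the vertices, so to pin down points in $\R^d$ we need enough directions to span $\R^d$ together with the perpendicularity constraints. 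Combining the requirement that $S$ spans $\R^d$ (at least $d$ directions) with the fact that perpendicularity to a maximal simplex of dimension $k < d$ forces extra directions beyond a minimal spanning set yields $|S| \geq d+1$.

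**Tightness.** To show the bound cannot be improved, I would construct a single simplicial complex $K$ in $\R^d$ together with a set $S$ of exactly $d+1$ directions for which $\descSet{\gen}{K}{S}$ is faithful, for each $\gen \in \{\ecc, \bc, \pd\}$. The natural candidate is a single vertex, or a simplex of dimension $k \le d-1$ positioned generically, where I choose $d$ directions forming (say) a basis-like spanning set to recover the vertex coordinates plus one additional perpendicular direction to satisfy Corollary~\ref{cor:perp}. Because concise descriptors reduce among themselves ($\st{\ecc} \preceq \st{\bc} \preceq \st{\pd}$ by Lemma~\ref{lem:twogroups}), it suffices to verify faithfulness for the weakest type $\ecc$; faithfulness then transfers to $\bc$ and $\pd$ automatically, and the cardinality $d+1$ is shared.

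**The main obstacle** will be the tightness construction, specifically verifying that $d+1$ directions genuinely suffice to \emph{faithfully} reconstruct the chosen complex, not merely recover its vertices. The vertex-recovery step is governed by results like \cite[Lemma~4]{belton2019reconstructing}, but I must also rule out all adversarial complexes on the same vertex set — exactly the role played by the perpendicular direction in Corollary~\ref{cor:perp} and the envelope argument of Lemma~\ref{lem:envelopes}. The delicate point is choosing the complex and the $d+1$ directions so that the envelope of each maximal simplex collapses to the correct dimension (so no adversarial higher simplex can hide) while keeping the spanning directions in general position relative to the simplex. I would carefully select a configuration — likely a low-dimensional simplex whose orthogonal complement accommodates the required perpendicular direction alongside a spanning set — and check directly that the resulting envelopes are minimal, invoking Lemma~\ref{lem:envelopes} to confirm no adversarial simplex survives.
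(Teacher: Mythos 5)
Your overall route matches the paper's, which presents this corollary as an immediate consequence of \lemref{envelopes} and \corref{perp} and notes that the bound is met when $\simComp$ is a single vertex; your lower-bound combination argument (a spanning set of exactly $d$ directions is a basis, and a basis cannot contain $d-k+1$ vectors lying in the $(d-k)$-dimensional orthogonal complement of a maximal $k$-simplex) is correct whenever $K$ has a maximal simplex of dimension $k<d$, and your remark that tightness need only be verified for $\ecc$ and then transfers to $\bc$ and $\pd$ via \lemref{reduce} is also correct. However, there is a genuine gap in the lower bound: when every maximal simplex of $K$ is $d$-dimensional (e.g., $K$ a closed $d$-simplex in $\R^d$), both \corref{perp} and \lemref{envelopes} are vacuous, and your fallback --- ``recovering the vertex set forces many directions'' --- yields only $|S|\geq d$, not $d+1$. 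The missing idea is the cone/join trick from the proof of \lemref{envelopes}, applied at a vertex: if $|S|\leq d$, then either $S$ fails to span $\R^d$ (and a translate of $K$ orthogonal to $\mathrm{span}(S)$ has identical descriptors), or $S$ is a basis, in which case the open cone $\{x \in \R^d \mid s\cdot x > 0 \text{ for all } s \in S\}$ is nonempty; choosing $p = v + x$ for a vertex $v$ of $K$ and $x$ in this cone, the complex $K \cup \{p, [v,p]\}$ has the new vertex and new edge entering every lower-star filtration at the same height, so they cancel in $\chi$, in every Betti number, and in persistence, making the two complexes indistinguishable to all three concise types and contradicting faithfulness. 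This closes the bound for all $K$.

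Your tightness plan also contains an error: the alternative candidate, a maximal simplex of dimension $1 \leq k \leq d-1$ with $d+1$ directions, does not work. Take a single edge $\sigma = [v_1,v_2]$ in $\R^2$: by \corref{perp}, any faithful $S$ must contain both unit normals $\pm n$ of $\sigma$, and if $S = \{n, -n, t\}$ then $S$ cannot distinguish $\sigma$ from the single vertex of $\sigma$ that is lowest with respect to $t$ --- in directions $\pm n$ both complexes show a single birth at the common height of the edge, and in direction $t$ the upper vertex enters together with the edge and is invisible to $\ecc$, $\bc$, and $\pd$ alike. So a single edge needs at least four directions, consistent with the $\Omega(d+n_1)$ behavior of \thmref{highlowbound}; only the single-vertex candidate achieves $|S| = d+1$, exactly as the paper asserts. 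For that case your plan is sound but unfinished: with $e_1, \ldots, e_d$ and $-\tfrac{1}{\sqrt{d}}(1,\ldots,1)$ the envelope of the vertex is zero-dimensional, and the verification that no adversarial complex survives is the argument spelled out in the paper's proof of \lemref{trivial}.
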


This bound is met whenever $\simComp$ is a single vertex.
However, minimal faithful sets of concise
descriptors are generally much larger.
%%In practice, the requirement that each maximal $k$-simplex in a simplicial complex
%%corresponds to~$d-k+1$ perpendicular directions is a huge hindrance when
%%forming faithful sets.
Counteracting the need for perpendicular
directions is the fact that, as~$d$ increases, more simplices span common
hyperplanes, so perpendicular directions can increasingly be shared.
We use these observations to lower bound the worst-case size of faithful
set of concise descriptors.

\begin{theorem}[Lower Bound for Worst-Case Concise Descriptor Complexity]
    \label{thm:highlowbound}
    Let $\gen \in
    \{\ecc, \bc, \pd\}$ and let~$K$ be a simplicial complex in $\R^d$ with
	$n_1$ edges. Then, the worst-case cardinality of a minimal 
        faithful descriptor
	set of type $\gen$ is $\Omega(d + n_1)$.
\end{theorem}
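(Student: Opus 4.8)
The plan is to exhibit, for each $d$ and $n_1$, a single simplicial complex $K \subset \R^d$ with $n_1$ edges whose every faithful set of type $\gen$ has cardinality $\Omega(d + n_1)$; since the quantity in the statement is a maximum over all complexes with $n_1$ edges, one witnessing family suffices. The $\Omega(d)$ contribution is immediate from \corref{strictconcise}, which already forces $|S| \geq d+1$ for any faithful set $S$. The real work is the $\Omega(n_1)$ contribution, and the guiding principle is the one flagged just before the statement: perpendicular directions can be \emph{shared} across simplices spanning a common hyperplane, so to force many directions we must arrange the edges so that almost no sharing is possible.

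Concretely, I take $K$ to be a matching: $n_1$ pairwise-disjoint edges $e_1, \dots, e_{n_1}$, so that each $e_i$ is a maximal $1$-simplex. I choose the vertex coordinates so that (i) the vertices are in general position, and (ii) the edge-direction vectors $w_i := b_i - a_i$, where $e_i = [a_i,b_i]$, are in \emph{general linear position}, meaning any $d$ of the $w_i$ are linearly independent. Both are generic conditions and can be met simultaneously (for instance, by placing endpoints on a moment curve and perturbing). The key geometric consequence is that a single direction $s \in \sph^{d-1}$ is perpendicular to $e_i$ exactly when $w_i$ lies in the hyperplane $s^\perp$, and since $s^\perp$ is $(d-1)$-dimensional, general linear position guarantees it contains at most $d-1$ of the $w_i$. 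Hence \emph{each direction is perpendicular to at most $d-1$ of the edges}.

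Now I run a double-counting argument on the bipartite incidence between $S$ and the edges, where $s \in S$ is declared incident to $e_i$ iff $s \perp e_i$. By \corref{perp}, each maximal edge (here $k=1<d$) forces at least $d-k+1 = d$ perpendicular directions in $S$, so the total number of incidences is at least $n_1 d$. By the previous paragraph, each direction contributes at most $d-1$ incidences, so the total is at most $|S|(d-1)$. Equating the two counts gives $n_1 d \leq |S|(d-1)$, hence $|S| \geq \tfrac{d}{d-1}\, n_1 \geq n_1$. Together with $|S| \geq d+1$ from \corref{strictconcise}, this yields $|S| \geq \max(d+1,\, n_1) \geq \tfrac12\big((d+1)+n_1\big)$, that is, $|S| = \Omega(d+n_1)$, as claimed. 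Since this $K$ also admits a finite faithful set, the bound is realized on a genuine instance rather than vacuously.

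The main obstacle I anticipate is making the ``no sharing'' step fully rigorous: I must verify that the two genericity requirements (vertices in general position for the faithfulness framework, and edge vectors in general linear position for the counting) are mutually compatible, and that \corref{perp} genuinely applies to each $e_i$ as a maximal simplex with $k<d$ for all $d \geq 2$. The counting itself is routine once the at-most-$(d-1)$-edges-per-direction bound is in hand; the delicate point is the uniform geometric claim that general linear position of the $w_i$ caps, \emph{simultaneously for every candidate direction}, the number of edges to which it can be perpendicular.
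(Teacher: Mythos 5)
Your proposal is correct, and it reaches the bound by a genuinely different route than the paper. The paper's proof takes the opposite geometric tack: it constructs a graph with $n_1 < d-1$ edges (for $d>2$) lying in a common $n_1$-flat, so that a $(d-n_1-1)$-sphere of directions is perpendicular to \emph{all} edges simultaneously; it then takes $d-1$ shared perpendicular directions and argues each edge still needs one further direction of its own (one not perpendicular to hyperplanes spanned by several edges, so that every envelope is one-dimensional), giving a minimal set of size exactly $d-1+n_1$. You instead engineer the complex to \emph{forbid} sharing --- a matching whose edge vectors are in general linear position, so any $s^{\perp}$ contains at most $d-1$ of them --- and run a double count of perpendicularity incidences against the per-edge requirement of $d$ directions from \corref{perp}, getting $|S| \geq \tfrac{d}{d-1}n_1$, which combines with \corref{strictconcise} to give $\Omega(d+n_1)$. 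Each approach buys something: the paper's construction yields a genuinely additive count $d-1+n_1$ in its regime and directly dramatizes the direction-sharing phenomenon flagged before the theorem, but as written it only covers $n_1 < d-1$; your incidence argument is uniform in $n_1$ versus $d$ (in particular it handles $n_1 \gg d$, where it gives the superlinear-in-$n_1$ factor $\tfrac{d}{d-1}$), and the combinatorial core is rigorous once the genericity of the $w_i$ is fixed, which is routine as you note (moment curve plus perturbation is compatible with \assref{mostgeneral}). Both proofs rest on the same foundation, \lemref{envelopes} and \corref{perp}, and both implicitly require $d \geq 2$ so that edges are maximal simplices with $k<d$; for $d=1$ the sphere of directions is finite and the claim is vacuous or degenerate either way, so this is not a real loss.
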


\begin{proof}
	We construct a simplicial complex, $K$, and bound the
        minimum cardinality of a faithful set for~$K$.  Suppose that, for
        $d>2$, that $\simComp$ is a graph in $\R^d$ with~$n_1 < d-1$ edges, and
        for some $S \subseteq \sph^{d-1}$, the set~$\descSet{\gen}{K}{S}$ is
        faithful.  Then, by \lemref{envelopes}, the envelope of each maximal
        edge $\sigma$ must be one-dimensional. Then, by \corref{perp}, for every
        such $\sigma$,~$S$ contains~$d-1+1 = d$ pairwise linearly independent
        directions perpendicular to~$\sigma$.  Let $S^\ast$ be a minimal subset
        of directions in $S$ satisfying the conditions of perpendicularity and
        one-dimensional envelopes.

	To build $S^\ast$, first note all edges of $K$
	are contained in a common~$n_1$-plane, so there is a $(d-n_1 -
	1)$-sphere's worth of directions perpendicular to \emph{all} edges
	simultaneously. Such directions are maximally
	efficient in the sense that each can ``count'' for
	all edges at once. We choose any $d-1$ pairwise linearly independent
        directions from this sphere to be included in $S^\ast$.
        Now we need
	an additional perpendicular direction for each edge to bring the total
	for each edge to $d$.  To ensure
	the envelopes of each edge are one-dimensional, these additional
	directions must not be perpendicular to any hyperplane defined by
	subsets of more than one edge. This means we must consider a total of
	$n_1$ additional directions, so that~$S^\ast$ has cardinality~$d-1 +
	n_1$.  Since~$|S^\ast|$ lower bounds $|S|$, we find $|S| \in
	\Omega(d+n_1)$.
\end{proof}

%\begin{remark}[Why Bound $n_1$?]
%	Bounding the number of edges for our construction in the proof of
%	\thmref{highlowbound} may seem counterintuitive.
%        It feels natural to
%        suppose more edges require more directions.
%        However, in the asymptotic
%	analysis, the increased need for
%	directions that are perpendicular to each edge is balanced by the
%	increase of subsets of edges that are parallel to common
%	$(d-1)$-planes.
%
%	To construct $S^\ast$ if $n_1 \geq d-1$, we could choose
%	directions simultaneously perpendicular to sets of $d-1$ edges,
%	choosing the sets so the envelopes of each edge are one-dimensional.
%        Partitioning the edges in this way, we find that
%        the size of $S^\ast$ is, surprisingly,~$d \left\lceil
%	\frac{n_1}{d-1}\right\rceil \in \Omega(n_1)$.  Thus, we have
%	an asymptotic lower bound that is independent of $d$.
%\end{remark}

\subsection{Verbose Descriptor Bounds}\label{ss:verbose}
We now shift to verbose descriptors, and
begin with the tight lower bound.
\begin{lemma}[Tight Lower Bound]\label{lem:strictaug}
Let $\simComp$ be a simplicial complex in $\R^d$ and $\auggen \in \{\aecc,
    \abc, \apd\}$. Suppose for some $S \subseteq
    \sph^{d-1}$ the set~$\descSet{\auggen}{K}{S}$ is faithful. Then,~$|S| \geq d$, and
    this bound is tight.
\end{lemma}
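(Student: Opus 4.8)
The plan is to prove the two halves separately: the lower bound $|S| \geq d$ via a translation argument that exploits the fact that verbose descriptors of lower-star filtrations depend only on vertex heights, and tightness by exhibiting a single vertex.

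For the lower bound, I would first show that any faithful $S$ must linearly span $\R^d$. Suppose it does not; then the directions of $S$ lie in a proper linear subspace $W \subsetneq \R^d$, so there is a nonzero vector $u$ with $s \cdot u = 0$ for every $s \in S$. Consider the translated complex $L := K + u$, obtained by sending each vertex $v$ to $v + u$. Rigid translation preserves affine independence, so $L$ is again in general position, and $L \neq K$ since $u \neq 0$. The key computation is that for each $s \in S$ and each vertex $v$ we have $s \cdot (v + u) = s \cdot v + s \cdot u = s \cdot v$, so the heights of all vertices, and hence each lower-star value $f_s(\sigma) = \max_{v \in \sigma}(s \cdot v)$, agree for $K$ and $L$ in every direction of $S$. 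Since $K$ and $L$ share the same abstract complex and the same filter values in each direction, a compatible index filter for one is compatible for the other, so the verbose descriptors (whether $\aecc$, $\abc$, or $\apd$) coincide: $\descSet{\auggen}{L}{S} = \descSet{\auggen}{K}{S}$ while $L \neq K$. This contradicts faithfulness. Hence $S$ spans $\R^d$, which forces $|S| \geq d$.

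For tightness, I would take $K$ to be a single vertex $v$ in $\R^d$ and let $S = \{s_1, \dots, s_d\}$ be any $d$ linearly independent directions. Each $\dirDesc{\auggen}{s_i}$ records a single zero-dimensional event at height $s_i \cdot v$ and nothing else, certifying that the underlying complex consists of exactly one simplex, a vertex, with no higher-dimensional simplices. The $d$ equations $s_i \cdot v = h_i$ form an invertible linear system and thus pin down $v$ uniquely, so the only complex producing these descriptors is $K$ itself. Therefore $\descSet{\auggen}{K}{S}$ is faithful with $|S| = d$, and combined with the lower bound this shows the bound is tight.

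The arguments are short; the one point that must be stated carefully is the claim in the lower bound that a verbose descriptor is a function of the filter values (together with the fixed combinatorics of the complex) alone, so that two complexes with identical filters in every direction of $S$ are genuinely indistinguishable. This is precisely what is guaranteed by the well-definedness of verbose descriptors via compatible index filtrations (cf.~\cite{fasy2019reconstructing}), and the only additional care needed is to note that the translated complex remains in general position, which is automatic since translation preserves affine independence.
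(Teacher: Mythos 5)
Your proposal is correct and takes essentially the same approach as the paper: the lower bound rests on the observation that fewer than $d$ directions cannot pin down vertex coordinates in $\R^d$ (your translation adversary $L = K + u$ is a rigorous version of the paper's one-line claim), and tightness is shown exactly as in the paper with a single vertex and $d$ directions. One small point in your favor: you correctly demand that the $d$ directions be \emph{linearly independent}, whereas the paper's proof asks only for ``pairwise linearly independent'' directions, which for $d \geq 3$ would not suffice (e.g., three coplanar directions in $\R^3$ leave a translation degree of freedom, precisely the failure your spanning argument rules out).
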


\begin{proof}
No vertex in $\simComp$ can be described using fewer than $d$ coordinates, so
    a set of descriptors of type~$\auggen$ with cardinality less than $d$ can never
be faithful. To see that this bound is tight, when
$\simComp$ is a single vertex, verbose descriptors generated by any $d$
pairwise linearly independent directions form a faithful set.
\end{proof}

Next, we identify a family of simplicial complexes for which minimal faithful
sets of verbose descriptors are at least linear in the
number of vertices.  We use~$\alpha_{i,j}$ to denote the angle that vector~$v_j
- v_i$ makes with the $x$-axis, taking value in~$[0, 2\pi)$.  We first observe
a consequence of a specific instance of the general phenomenon that a
simplicial complex stratifies the sphere of directions based on vertex
order~\cite{curry2022many,leygonie2019framework}.

\begin{observation}\label{obs:semicircle}
    Suppose a simplicial complex in $\R^2$ contains an isolated
    edge~$[v_1, v_2]$.
        Then, a
	birth event occurs height $s \cdot v_1$ in
	$\dirDescLong{\apd}{\simComp}{\dir}$ for all~$\dir$
	the interval~$I = (\alpha_{1,2} - \pi/2, \alpha_{1,2}
	+ \pi/2)$ of $\sph^1$ (all $s$ so that~$s \cdot v_1 > s \cdot v_2$) and as an
	instantaneous event for all~$\dir \in I^C$.
\end{observation}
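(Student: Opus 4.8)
The plan is to localize everything to the three simplices $v_1$, $v_2$, and $e=[v_1,v_2]$ comprising the isolated edge, since isolation means these are the only simplices that can contribute degree-zero events to the connected component spanned by $v_1$ and $v_2$. First I would make this independence precise: fixing a compatible index filter for $f_s$, the two vertices are positive and the edge is negative for $\beta_0$, and because $v_1$ and $v_2$ are joined only to each other, the edge can merge only the two classes born at $v_1$ and $v_2$, never a class from elsewhere in $\simComp$, and the surviving class can never later merge with anything else. Hence the pair of degree-zero points this component places on $\dirDescLong{\apd}{\simComp}{s}$ is computed internally, by applying the elder rule to this triple alone, independent of the rest of $\simComp$.

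Next I would partition $\sph^1$ by the sign of $s\cdot(v_1-v_2)$, which is exactly the split into the open semicircle $I=\{s : s\cdot v_1 > s\cdot v_2\}$ and its complement. Adopting the sweep convention consistent with the statement — the one in which the higher endpoint's class is the elder class that survives the merge — the core computation is a two-case check. When $s\in I$, so $v_1$ is the higher endpoint, the class born at height $s\cdot v_1$ is older than the class born at $s\cdot v_2$; the edge enters at the height of the lower endpoint $v_2$ and kills $v_2$'s younger class, so the class born at $s\cdot v_1$ survives. As the component never merges again, this is an essential (hence non-instantaneous) birth at height $s\cdot v_1$, giving the claimed birth event. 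When $s\notin I$ with $s\cdot v_1 < s\cdot v_2$, the roles reverse: $v_1$ is now the younger endpoint, appears at the same height as the edge, and is immediately paired with it, producing a zero-lifespan (instantaneous) point at height $s\cdot v_1$.

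It remains to handle the two boundary directions perpendicular to $v_2-v_1$, where $s\cdot v_1 = s\cdot v_2$; these lie in $I^C$ since $I$ is open. Here both vertices and the edge share a single height, and breaking the tie with the index filter pairs exactly one vertex with the edge at that common value, yielding an instantaneous point at height $s\cdot v_1$ as required, so the conclusion for $I^C$ holds on its closure. Assembling the two interior cases with the boundary case over all of $\sph^1$ yields the observation.

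The main obstacle — and the step most prone to a sign error — is pinning the elder rule so that the orientation of $I$ comes out as worded: the surviving class must attach to the higher endpoint $v_1$ exactly on $I=\{s : s\cdot v_1 > s\cdot v_2\}$, which fixes the direction of the filtration sweep used throughout the argument. I would state this convention explicitly at the outset, noting that the opposite convention merely interchanges the labels $v_1$ and $v_2$ (and correspondingly reflects $I$). The remaining independence-from-the-rest-of-$\simComp$ claim is routine and follows from the fact that degree-zero merges respect the decomposition of $\simComp$ into connected components.
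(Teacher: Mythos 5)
There is a genuine error here, and it sits exactly at the step you flagged as most delicate. You resolved the sign tension by positing a free ``sweep convention'' in which the higher endpoint's class is the elder and the edge enters at the height of the \emph{lower} endpoint. But the paper leaves no such freedom: the lower-star filter is fixed as $f_s(\sigma) = \max\{s\cdot v \mid v \in K_0 \cap \sigma\}$ with sublevel sets $f_s^{-1}(-\infty,t]$, so the edge $[v_1,v_2]$ enters at the height of its \emph{higher} endpoint, and the elder rule keeps the class born at the \emph{lower} endpoint. Your convention computes the diagram of direction $-s$, not $\apd(\simComp, s)$. Moreover, you misidentified the interval: with the paper's definition of $\alpha_{1,2}$ as the angle of $v_2 - v_1$, a direction $s$ at angle $\theta$ satisfies $s\cdot(v_2 - v_1) = \lVert v_2 - v_1\rVert \cos(\theta - \alpha_{1,2}) > 0$ exactly when $\theta \in I$, so $I = \{s : s\cdot v_1 < s\cdot v_2\}$ --- on $I$ the vertex $v_1$ is the \emph{lower} endpoint, not the higher one. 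The parenthetical ``(all $s$ so that $s\cdot v_1 > s\cdot v_2$)'' in the statement is a typo with the inequality reversed, as you can confirm from how \obsref{semicircle} is invoked in the proof of \lemref{clothespin}, where $v_1$ births on $B = (\alpha_{1,2}-\pi/2, \alpha_{1,2}+\pi/2)$ for the edge $[v_1,v_2]$.

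The net effect is that your two errors cancel: you reach the correct headline conclusion (birth on $I$, instantaneous on $I^C$), but both cases of your core two-case check are false under the paper's definitions, so the proof is not correct as written. The fix is mechanical: on $I$, the class born at $s\cdot v_1$ is the elder; the edge enters at height $s\cdot v_2$ and kills the younger class born at $v_2$, and since the edge is isolated the surviving class never merges again, giving the essential point $(s\cdot v_1, \infty)$. On the interior of $I^C$, the roles reverse: $v_1$ is the higher endpoint, its class is born at the same height at which the edge enters, and the index filter pairs them, giving the instantaneous point $(s\cdot v_1, s\cdot v_1)$. Your localization argument (the isolated edge's degree-zero events are computed internally to the triple $v_1, v_2, [v_1,v_2]$) and your handling of the two boundary directions, where the tie-break still produces an instantaneous point at the common height, are both sound and worth keeping; note the paper treats this statement as an observation with no written proof, so the corrected two-case check is exactly the intended content.
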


Next, we make the following geometric observation.

\begin{observation}\label{obs:extremal}
    Consider a pair of nested triangles as in \figref{angles}.
    Then, angle $A$ is larger than $\theta$,~$\phi - B$, and~$\psi - C$.
\end{observation}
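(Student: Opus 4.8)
The plan is to prove each of the three inequalities by elementary planar angle-chasing, relying only on two classical facts: the exterior-angle inequality for triangles (an exterior angle equals, hence exceeds, each remote interior angle) and that the interior angles of a triangle sum to $\pi$. First I would fix notation from \figref{angles}, labeling the inner and outer triangle and identifying, for each of the six symbols $A, B, C, \theta, \phi, \psi$, the precise pair of rays that subtends it and the vertex at which it is based. The single structural fact that drives all three bounds is that, because the inner triangle lies strictly inside the outer one, each ray bounding an inner angle lies \emph{between} two corresponding rays of the outer configuration; orienting all rays consistently and measuring angles from a fixed reference direction lets me state this betweenness without a case split.

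For the bound $A > \theta$, I would argue that the two rays subtending $\theta$ are nested strictly between the two rays subtending $A$ at the shared vertex, so $\theta$ is a proper sub-angle of $A$; strictness comes from the triangles being properly nested rather than sharing that edge. For the remaining two bounds I would rewrite them in the additive form $A + B > \phi$ and $A + C > \psi$. For each, introduce the auxiliary sub-triangle whose exterior angle at the relevant vertex is $\phi$ (respectively $\psi$): by the exterior-angle theorem this exterior angle equals the sum of the two remote interior angles, and the nesting forces those remote interior angles to be dominated by $A$ and $B$ (respectively $A$ and $C$). Summing yields the claim. Equivalently, since $A + B + C = \pi$, these inequalities read $\phi < \pi - C$ and $\psi < \pi - B$, which can also be read off directly from the angle sum of the appropriate sub-triangle; I would present whichever derivation matches the figure most cleanly.

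The main obstacle is not analytic but combinatorial: once the correct rays are identified, every step is a one-line application of the exterior-angle inequality, so nearly all the care goes into reading off from \figref{angles} exactly which rays bound each labeled angle and confirming that strict nesting really does place the inner rays strictly between the outer ones in each instance. I would therefore invest the effort in a single case-free description of the configuration, after which the three inequalities follow uniformly and strictness is immediate from the hypothesis that the two triangles are nested with disjoint boundaries.
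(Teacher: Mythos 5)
The paper offers no proof at all here---\obsref{extremal} is asserted as self-evident from \figref{angles}---so your argument must stand on its own, and as written it has one concrete failure caused by misreading the configuration. In the figure (the configuration that \lemref{clothespin} actually uses), the two triangles share the base edge $[v_2,v_4]$: the outer triangle $v_1v_2v_4$ carries $\theta$ at $v_1$, $\phi$ at $v_2$, and $\psi$ at $v_4$, while the inner triangle $v_2v_3v_4$ carries $B$ at $v_2$, $A$ at the interior vertex $v_3$, and $C$ at $v_4$. Consequently $A$ and $\theta$ are based at \emph{different} vertices ($v_3$ and $v_1$), and your justification for $A>\theta$---that the rays subtending $\theta$ lie strictly between the rays subtending $A$ ``at the shared vertex,'' making $\theta$ a proper sub-angle of $A$---does not apply: there is no shared vertex, and nesting of rays based at distinct points does not by itself compare angle measures. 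Your blanket structural hypothesis also misfires: the triangles are not ``nested with disjoint boundaries'' (they share a side), so strictness must instead come from $v_3$ lying strictly in the interior of the outer triangle, which the clothespin construction and general position guarantee. Finally, your exterior-angle step is inverted: $\phi$ is not naturally an exterior angle of any sub-triangle here; it is $A$ that plays the exterior-angle role.

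The gap is repairable with exactly the tools you invoke. At the two shared base vertices the sub-angle argument is valid and gives $B<\phi$ and $C<\psi$; combined with the angle sums $A+B+C=\pi=\theta+\phi+\psi$, this yields $A=\pi-B-C>\pi-\phi-\psi=\theta$, and your additive reformulations $A+B>\phi$ and $A+C>\psi$ become $\phi+C<\pi$ and $\psi+B<\pi$, which follow from $C<\psi$, $B<\phi$, and $\phi+\psi<\pi$. Alternatively, a single exterior-angle computation proves all three inequalities at once: extend $v_2v_3$ past $v_3$ to meet $[v_1,v_4]$ at a point $Q$; two applications of the exterior-angle theorem give $A=\theta+(\phi-B)+\measuredangle Qv_4v_3$, and symmetrically, extending $v_4v_3$ to meet $[v_1,v_2]$ at $P$ gives $A=\theta+(\psi-C)+\measuredangle Pv_2v_3$. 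Since every remainder term is strictly positive when $v_3$ is strictly interior, $A$ strictly exceeds $\theta$, $\phi-B$, and $\psi-C$. So your toolkit (exterior angles plus angle sums) is the right one and even yields a clean written proof where the paper has none, but the specific step you wrote down for $A>\theta$ would fail until the angles are correctly located in the figure.
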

\begin{figure}[h!]
    \centering
    \includegraphics[width=.25\textwidth]{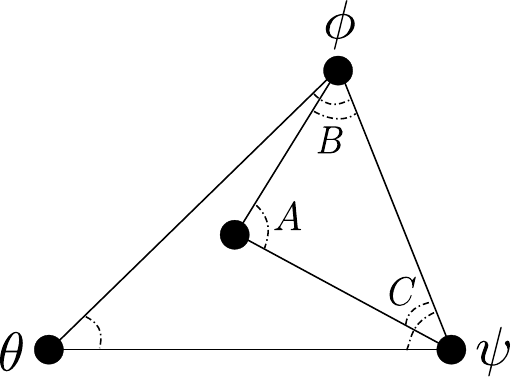}
    \caption{Nested triangles as discussed in \obsref{extremal}}
\label{fig:angles}
\end{figure}

We now construct the building block that forms the complexes used in our bound.

\begin{construction}[Clothespin Motif]\label{cons:clothespin}
Let $\simComp$ be a simplicial complex in $\R^2$ with a vertex set~$\{v_1, v_2,
v_3, v_4\}$ such that only $v_3$ is in the interior of the convex hull of
    $\{v_1,v_2,v_4\}$, and that the edge set consists of~$[v_1, v_2]$ and $[v_3,
    v_4]$. See \figref{clothespinK}.
\end{construction}

\begin{figure}[h!]
    \centering
    \begin{subfigure}[b]{0.3\textwidth}
        \centering
        \includegraphics[scale=.6]{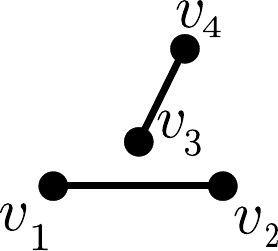}
	    \subcaption{$\simComp$}
	    \label{fig:clothespinK}
    \end{subfigure}
	%\hspace{5em}
    \begin{subfigure}[b]{0.3\textwidth}
        \centering
        \includegraphics[scale=.6]{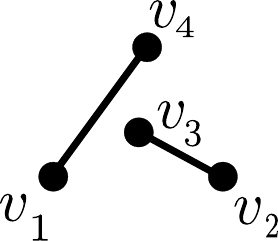}
	    \subcaption{$\simComp'$}
	    \label{fig:clothespinKprime}
    \end{subfigure}
    \caption{The two simplicial complexes of
    \lemref{clothespin}.}
\label{fig:kandkprime}
\end{figure}

\consref{clothespin} was built specifically for the following necessary
condition:

\begin{restatable}[Clothespin Representability]{lemma}{clothespin}\label{lem:clothespin}
    Let~$\simComp$ be as in \consref{clothespin}, and suppose
    that $\descSet{\apd}{K}{S}$ is faithful. Then, there is
    some~$\dir \in S$ so that the angle formed between $\dir$ and
    the $x$-axis
    lies in the region
    $$W = [\alpha_{3,2} - \pi/2, \alpha_{3,4} - \pi/2] \cup
    [\alpha_{3,2} + \pi/2, \alpha_{3,4} + \pi/2].
    $$
\end{restatable}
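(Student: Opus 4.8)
The plan is to argue by an adversarial construction using the companion complex $K'$ of \figref{kandkprime}. Where $K$ consists of the two isolated edges $[v_1,v_2]$ and $[v_3,v_4]$, the complex $K'$ re-pairs the same four vertices into the ``crossing'' isolated edges $[v_1,v_4]$ and $[v_2,v_3]$. Since $\descSet{\apd}{K}{S}$ is faithful and $K' \neq K$, there must be some $s \in S$ with $\apd(K,s) \neq \apd(K',s)$. Thus it suffices to prove the contrapositive: that $\apd(K,s) = \apd(K',s)$ for every direction $s$ whose angle lies outside $W$, since then any distinguishing direction is forced into $W$.

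First I would reduce each diagram to a pair of height multisets. Both complexes are disjoint unions of two isolated edges and contain no $1$-cycles, so only degree zero matters. By \obsref{semicircle}, each isolated edge $[a,b]$ contributes, in direction $s$, a persistent (infinite) bar born at the lower of its two endpoints and an instantaneous point at the higher one. Hence, writing $h_i = s \cdot v_i$, the diagram $\apd(K,s)$ is determined by the multiset of infinite-bar heights $\{\min(h_1,h_2),\min(h_3,h_4)\}$ together with the instantaneous-point heights $\{\max(h_1,h_2),\max(h_3,h_4)\}$, while $\apd(K',s)$ is determined by the re-paired multisets over $\{v_1,v_4\}$ and $\{v_2,v_3\}$. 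Comparing the two diagrams thus becomes a purely combinatorial question about the ordering of $h_1,h_2,h_3,h_4$.

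The crux is the case analysis on that ordering, and here the geometry of \consref{clothespin} is essential: because $v_3$ lies in the interior of the convex hull of $\{v_1,v_2,v_4\}$, the height $h_3$ is a strict convex combination of $h_1,h_2,h_4$ and is therefore never the global maximum or minimum, so the extreme heights are always supplied by hull vertices and appear identically in both pairings. A short analysis then shows the two multisets coincide exactly when $h_3$ is \emph{not} strictly between $h_2$ and $h_4$, and differ precisely when $v_3$'s height is sandwiched between those of $v_2$ and $v_4$; this sandwiched set of directions is exactly the region bounded by the perpendiculars to $[v_3,v_2]$ and $[v_3,v_4]$, namely $W$. \obsref{extremal} supplies the angle comparisons needed to confirm these perpendiculars occur in the claimed cyclic order, so that $W$ is exactly the stated union of two antipodal arcs. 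The main obstacle I anticipate is the bookkeeping of this case analysis: tracking, as $s$ crosses each critical perpendicular, that both the infinite-bar multiset and the instantaneous-point multiset stay matched throughout $W^C$, and in particular verifying that the simultaneous re-pairing of the non-$v_3$ edge ($[v_1,v_2]$ versus $[v_1,v_4]$) never introduces a discrepancy on its own outside $W$.
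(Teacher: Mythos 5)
Your proposal is correct, and it takes a genuinely different route from the paper's own proof. The paper argues vertex-by-vertex: for each $v_i$ it computes the region $R_i \subset \sph^1$ of directions where the \emph{event type} (birth versus instantaneous) at the height of $v_i$ differs between $\simComp$ and $\simComp'$, then invokes \obsref{extremal} to show $R_1, R_2, R_4 \subseteq \overline{R}_3$, so that $W = \cup_i R_i = \overline{R}_3$. You instead compare the diagrams globally: since both complexes are two disjoint edges with only degree-zero homology, $\apd(\cdot,s)$ is determined by the multiset of infinite-bar births (the per-edge minima of $h_i = s\cdot v_i$), the instantaneous heights being the complement inside the common multiset $\{h_1,h_2,h_3,h_4\}$; equality of diagrams thus reduces to $\{\min(h_1,h_2),\min(h_3,h_4)\} = \{\min(h_1,h_4),\min(h_2,h_3)\}$, and your convexity observation --- $h_3$ is a strict convex combination of $h_1,h_2,h_4$, hence never the global extremum --- is exactly what eliminates the orderings that would otherwise create discrepancies outside the sandwich region (e.g., with $h_1$ the global minimum, $h_3 > \max(h_2,h_4)$ is infeasible, which is the only ordering in which the re-pairing of the non-$v_3$ edge could act up on its own). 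I verified the case analysis goes through: the min-multisets agree whenever $h_3$ is not strictly between $h_2$ and $h_4$, and the sandwich region is precisely the two antipodal wedges bounded by $\alpha_{3,2}\pm\pi/2$ and $\alpha_{3,4}\pm\pi/2$, i.e., $W$; here \obsref{extremal} is only needed for arc bookkeeping, whereas the paper needs it essentially. One caveat: your ``differ \emph{precisely} when sandwiched'' is slightly too strong. At the two antipodal directions perpendicular to $v_1 - v_3$ (which always lie in $W$, since the ray from $v_3$ opposite to $v_1$ falls in the cone spanned by the rays toward $v_2$ and $v_4$), we get $h_1 = h_3$, and then both pairings yield the identical multiset $\{\min(h_1,h_2),\min(h_1,h_4)\}$, so the diagrams coincide even though $h_3$ may be sandwiched. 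The paper's proof makes the same over-claim (``for any $\dir \in W$ the diagrams differ''), and in both proofs it is harmless: the lemma only needs the direction you isolate, namely that $\apd(K,s) = \apd(K',s)$ for every $s \notin W$, forcing the distinguishing direction guaranteed by faithfulness into $W$. Indeed your multiset formulation is the more honest notion of diagram equality here, since a verbose diagram does not remember which vertex produced which point.
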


\begin{proof}
    Let $\simComp'$ be a simplicial complex immersed in $\R^2$ with the same vertex set as
    $\simComp$, but with edges~$[v_1, v_4]$ and~$[v_2, v_3]$ (see
    \figref{clothespinKprime}).
    Recall that, since $\descSet{\apd}{K}{S}$ is faithful, by definition, the set
    $S$ must contain some direction~$s$ so that~$\descSet{\apd}{K}{s} \neq
    \descSet{\apd}{K'}{s}$.

    Each vertex corresponds to either a birth event or an instantaneous event
    depending on the direction of filtration.
    We proceed by considering each vertex $v_i$ individually and determining subsets
    $R_i \subset \sph^1$ such that, whenever $s \in R_i$, the event at $\dir
    \cdot v_i$ is different when filtering over $\simComp$ versus $\simComp'$,
    but for~$s_* \not \in R_i$, the type of event at $\dir_* \cdot v_i$ is the
    same between the two graphs.  \figref{observability} shows these regions,
    and in what follows, we define them precisely.

    First, consider $v_1$.
    By~\obsref{semicircle}, $v_1 \in \simComp$ corresponds to a birth event for all
    directions in the interval $B = (\alpha_{1,2}- \pi/2, \alpha_{1,2}+ \pi/2)$ and
    $v_1 \in \simComp'$ corresponds to a birth event for all directions in the
    interval $B' = (\alpha_{1,4} - \pi/2, \alpha_{1,4} + \pi/2)$. Then, we write $R_1 =
    (B \setminus B') \cup (B' \setminus B)$, which is the wedge-shaped region such that
    for any $\dir \in R_1$, the type of event associated to~$v_1 \in \simComp$ and
    $v_1 \in \simComp'$ differ, meaning~$\dirDescLong{\apd}{\simComp}{\dir} \neq \dirDescLong{\apd}{\simComp'}{\dir}$.

    Using this same notation, identify the wedge-shaped region $R_i$ for vertex
    $i \in [2,3,4]$ such that any
    direction from $R_i$ generates verbose persistence diagrams
    that have different event types at the height of vertex
    $v_i$ when filtering over $\simComp$ versus~$\simComp'$. Similar
    arguments for $i \in [2,3,4]$ give us the complete list;
    \begin{align*}
        R_1 &= (\alpha_{1,2} - \pi/2, \alpha_{1,4} - \pi/2] \cup [\alpha_{1,2}+
        \pi/2,
        \alpha_{1,4} + \pi/2) \\
        R_2 &= (\alpha_{2,3} - \pi/2, \alpha_{2,1} - \pi/2] \cup [\alpha_{2,3} +
        \pi/2,
        \alpha_{2,1} + \pi/2) \\
        R_3 &= (\alpha_{3,2} - \pi/2, \alpha_{3,4} - \pi/2] \cup  [\alpha_{3,2}
        + \pi/2,
        \alpha_{3,4} + \pi/2) \\
        R_4 &= (\alpha_{1,4} - \pi/2, \alpha_{3,4} - \pi/2] \cup [\alpha_{1,4} +
        \pi/2, \alpha_{3,4} + \pi/2)
    \end{align*}
    Let $W = \cup_{i = 1}^4 R_i$, meaning $W$ is the set of directions for which
    the corresponding filtrations have different event types at some vertex of
    $K$ and $K'$.
    Then, for any $\dir \in W$, we have
    $\dirDescLong{\apd}{\simComp}{\dir} \neq
    \dirDescLong{\apd}{\simComp'}{\dir}$, and for any $\dir_* \in W^C$, we
    have~$\dirDescLong{\apd}{\simComp}{\dir_*} = \dirDescLong{\apd}{\simComp'}{\dir_*}$.

    Finally, we claim that $W$ is the closure of ${R_3}$, denoted
    $\overline{R}_3$, i.e., exactly the region
    described in the lemma statement.  This is a direct corollary to
    \obsref{extremal}; the angles swept out by each region correspond to
    the angles formed by pairs of edges in $K$ and $K'$; in particular, the
    angle $\measuredangle v_2 v_3 v_4$ is the largest and geometrically
    contains the others.  This means the extremal boundaries over all
    $R_i$'s are formed by the angles $\alpha_{2,3} \pm \pi/2$ and
    $\alpha_{3,4} \pm \pi/2$, the defining angles of $R_3$. Each of these
    four angles appears as an included endpoint for some
    $R_i$, so~$R_1, R_2, R_4 \subseteq \overline{R}_3 = W$
    (\figref{observability}) and we have shown our claim.
\end{proof}

\begin{figure}[h!]
    \centering
    \includegraphics[width=.35\textwidth]{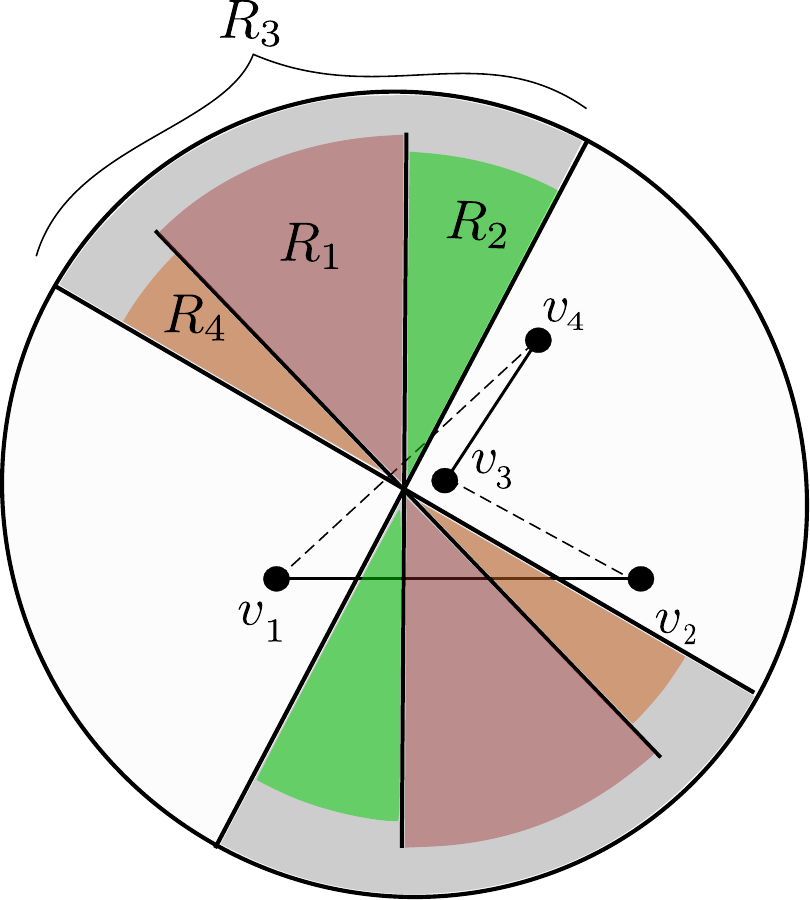}
    \caption{The regions described in the proof of \lemref{clothespin}, with
	additional shading in the interior of the sphere of directions to aid
	in visibility. $K$ is shown as
    solid black edges and~$K'$ as dashed edges. For any lower-star filtration in
    a direction contained in $R_i$, the
    event at vertex~$v_i$ differs when considering~$K$ or~$K'$, thus, such
    directions are able to distinguish $K$ from $K'$. Note that any direction
    outside the regions of observability (i.e., the non-shaded portions of the
    circle) is not able to distinguish $K$ from $K'$.}
\label{fig:observability}
\end{figure}

We call $W$, the intervals of directions in $\sph^1$ for which
corresponding verbose descriptors can distinguish $K$ from $K'$ a
clothespin's \emph{region of observability} (similar to observability for
concise Euler characteristic functions
in~\cite{fasy2018challenges,curry2022many}). Crucially,~$W$ is defined
by $\measuredangle v_2v_3v_4$, so we have the following.

\begin{remark}[$W$ Can be Arbitrarily Small]\label{rem:shrinkwedge}
As the angle $\measuredangle v_2 v_3 v_4$ approaches zero, the region of
observability from \lemref{clothespin} also approaches zero.
\end{remark}

We use \remref{shrinkwedge} to piece together clothespins
so their regions of observability do not overlap.

\begin{construction}[Clothespins on a Clothesline]\label{cons:clothesline}
    Let $\simComp^{(m)}$ be a simplicial complex in $\R^2$ formed by $m$ copies of
\consref{clothespin} ($m$ clothespin motifs) such that the regions of
	observability for each clothespin do not intersect. This is
	possible for any $m$ by \remref{shrinkwedge}.
\end{construction}
See \figref{clothesline}. \consref{clothesline} implies a lower bound on the
worst-case cardinality of faithful sets of verbose persistence diagrams, which
we formalize in the following lemma.
\begin{figure}[h!]
    \centering
    \includegraphics[width=.6\textwidth]{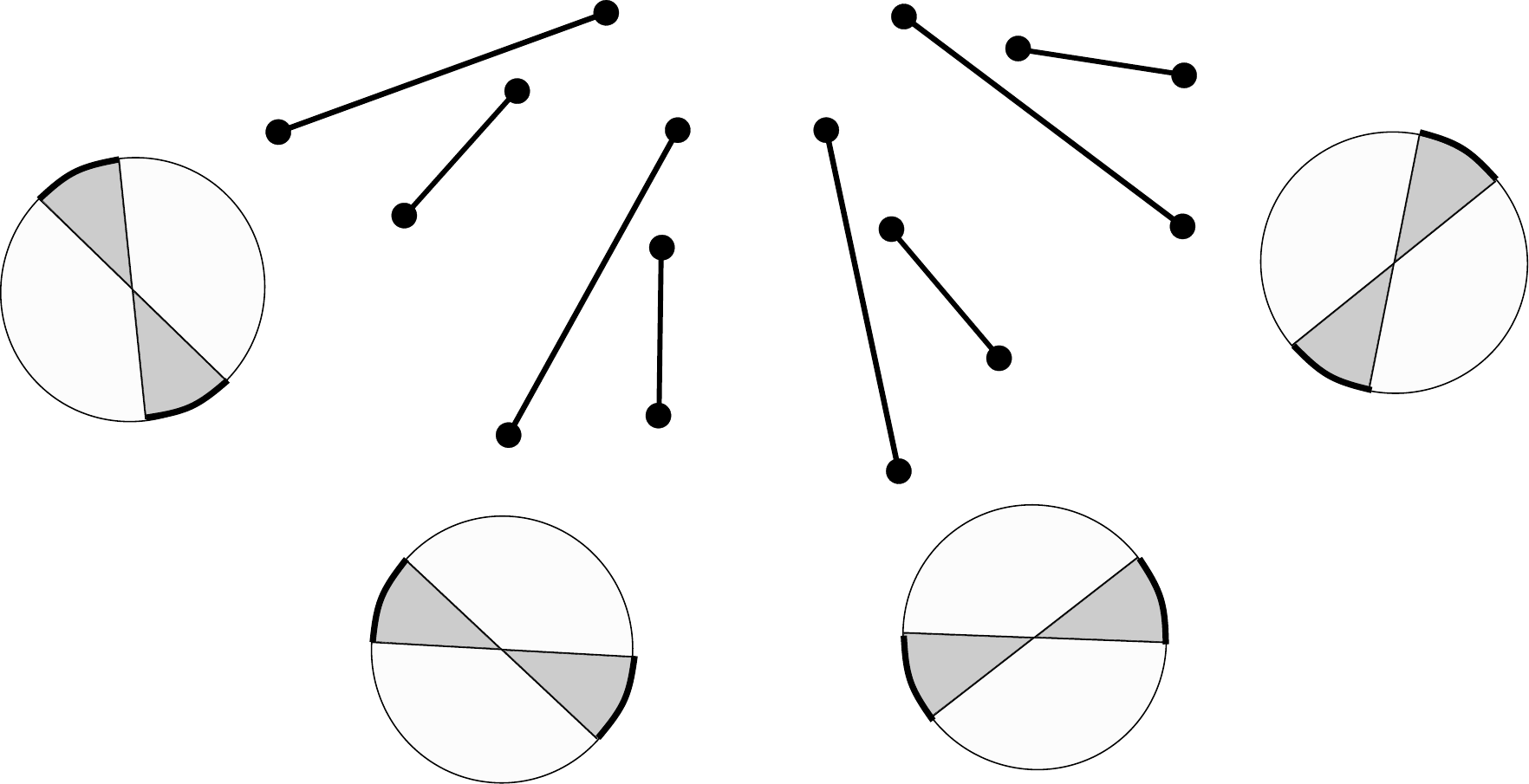}
    \caption{An example of $K^{(m)}$ for $m=4$. Regions of
observability are shown below each clothespin. By construction, each of these
regions of $\sph^1$ are disjoint.}
\label{fig:clothesline}
\end{figure}
\begin{lemma}[Verbose Persistence Diagram Complexity]\label{lem:apdbound}
    Let $\simComp^{(m)}$ be as in \consref{clothesline} and
    suppose~$\descSet{\apd}{K^{(m)}}{S}$ is a faithful set. Then, $S$
    contains at
    least one direction in each of the $m$ regions of
    observability, so $|S| \geq m  = n_0 / 4$.  Thus, $|S|$ is
    $\Omega(n_0)$.
\end{lemma}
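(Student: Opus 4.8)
The plan is to show that each of the $m$ clothespins forces $S$ to contain a dedicated direction, and then to use the disjointness of their regions of observability to conclude these directions are distinct. First I would record that $K^{(m)}$ is a disjoint union of $m$ clothespin motifs, each contributing four vertices (\consref{clothespin}), so no vertices are shared and $n_0 = 4m$; it therefore suffices to prove $|S| \geq m$.

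The main step is a localization argument run one clothespin at a time. Fix the $j$th clothespin, with vertices and edges labeled as in \consref{clothespin}, and let $W_j$ be its region of observability. Suppose, for contradiction, that no direction of $S$ has angle lying in $W_j$. Let $\tilde{K}$ be the complex obtained from $K^{(m)}$ by replacing the two edges of the $j$th clothespin with the swapped pair $[v_1,v_4]$ and $[v_2,v_3]$ (the $\simComp'$ configuration from the proof of \lemref{clothespin}), leaving every other clothespin unchanged. For each $\dir \in S$, the verbose persistence diagram $\dirDesc{\apd}{\dir}$ of either complex is the multiset union of the verbose diagrams of its individual clothespins, with heights read off as $\dir \cdot v$; the contributions of the clothespins other than the $j$th coincide for $K^{(m)}$ and $\tilde{K}$, and the proof of \lemref{clothespin} shows the $j$th contribution changes under the edge swap only for directions whose angle lies in $W_j$. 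Since $S$ avoids $W_j$, we obtain $\descSet{\apd}{K^{(m)}}{S} = \descSet{\apd}{\tilde{K}}{S}$ with $\tilde{K} \neq K^{(m)}$, contradicting faithfulness.

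Hence $S$ must meet each region of observability $W_1, \dots, W_m$. By \consref{clothesline} these regions are pairwise disjoint, so the directions witnessing membership are distinct and $|S| \geq m = n_0/4$; thus $|S| \in \Omega(n_0)$.

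The step I expect to be the main obstacle is justifying the per-component decomposition of the verbose persistence diagram together with the locality of the edge swap, since a compatible index filter interleaves simplices across clothespins. The point to pin down is that $H_*$ of a disjoint union splits as a direct sum, so the boundary map is block diagonal and every birth is paired with a death inside the same clothespin, even when vertices of different clothespins share a height in direction $\dir$. This is exactly what makes the real-valued verbose diagram a multiset union of per-clothespin diagrams and renders the swap of the $j$th clothespin's edges genuinely local.
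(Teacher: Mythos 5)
Your proof is correct and follows essentially the same route the paper intends: the lemma is stated there as an immediate consequence of \lemref{clothespin} and the disjointness of the regions of observability in \consref{clothesline}, which is exactly your per-clothespin edge-swap contradiction. Your explicit justification that the verbose diagram of a disjoint union splits as a multiset union of per-component diagrams (block-diagonal boundary, pairings internal to components) is a detail the paper leaves implicit, and it is the right thing to pin down.
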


By \thmref{order}, \lemref{apdbound} implies the following:

\begin{theorem}[Lower Bound for Worst-Case Verbose Descriptor Complexity]
    \label{thm:allbound}
    Let $\auggen \in \{\aecc, \abc, \apd\}$. Then, the worst-case cardinality of
	a minimal descriptor set of type $\auggen$ is $\Omega(n_0)$.
\end{theorem}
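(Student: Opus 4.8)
The plan is to deduce the general lower bound for all three verbose descriptor types from the single-type bound of \lemref{apdbound} by transporting it across the partial order established in \thmref{order}. The crucial structural fact is that $\apd$ is the strongest of the three verbose descriptors: \lemref{twogroups} gives $\st{\aecc} \preceq \st{\abc} \preceq \st{\apd}$, which by \defref{preceq} means that for every simplicial complex $\simComp$ immersed in $\R^d$ we have the chain
\[
    \Gamma(\simComp, \aecc) \geq \Gamma(\simComp, \abc) \geq \Gamma(\simComp, \apd).
\]

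First I would recall the content of \lemref{apdbound}: applied to the clothesline family $\simComp^{(m)}$ of \consref{clothesline}, it shows that any faithful set of verbose persistence diagrams must contain a direction in each of the $m$ disjoint regions of observability, so $\Gamma(\simComp^{(m)}, \apd) \geq m = n_0/4$, i.e.\ $\Gamma(\simComp^{(m)}, \apd) \in \Omega(n_0)$. This already certifies the worst-case $\Omega(n_0)$ bound for $\apd$.

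Next I would propagate this bound upward along the strength order, using the \emph{same} witnessing family $\simComp^{(m)}$. The displayed inequalities immediately give $\Gamma(\simComp^{(m)}, \abc) \geq \Gamma(\simComp^{(m)}, \apd) \geq n_0/4$ and, in turn, $\Gamma(\simComp^{(m)}, \aecc) \geq n_0/4$. Hence the clothesline complexes simultaneously witness an $\Omega(n_0)$ lower bound on the minimal faithful set size for each $\auggen \in \{\aecc, \abc, \apd\}$, which is exactly the claim.

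I expect no serious obstacle remains here, since the genuine work lies in \lemref{apdbound} and the geometry of \consref{clothesline} and \lemref{clothespin} behind it. The single point requiring care is the orientation of the strength relation: because $\apd$ is strongest, its faithful sets are the smallest, so a lower bound on $\Gamma(\simComp, \apd)$ is the hardest to prove yet automatically dominates the larger quantities $\Gamma(\simComp, \abc)$ and $\Gamma(\simComp, \aecc)$. Thus no new adversarial construction is needed; the argument is purely a transfer of the established bound through \thmref{order}.
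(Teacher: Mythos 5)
Your proposal is correct and takes essentially the same route as the paper, which proves \thmref{allbound} in exactly this way: \lemref{apdbound} gives $\Gamma(\simComp^{(m)}, \apd) \geq n_0/4$ on the clothesline family, and the strength order $\st{\aecc} \preceq \st{\abc} \preceq \st{\apd}$ from \thmref{order} (via \lemref{twogroups}) unwinds to $\Gamma(\simComp^{(m)}, \aecc) \geq \Gamma(\simComp^{(m)}, \abc) \geq \Gamma(\simComp^{(m)}, \apd)$, transferring the bound to all three verbose types. You have merely spelled out, with the orientation of $\preceq$ handled correctly, the one-line transfer the paper leaves implicit.
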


%-----------------Discussion--------------
\section{Discussion}\label{sec:discussion}
We provide a framework for comparing general topological descriptor types by
their ability to efficiently represent simplicial complexes. The tools
developed here are a first step towards more theoretical justifications 
for the use of particular descriptor types in applications.

We focus on the descriptors that are particularly relevant to
applications and related work; verbose and concise Euler characteristic
functions, verbose and concise Betti functions, and verbose and concise
persistence diagrams. We give a partial order on this set of six
descriptors, including the strict inequality, $\st{\abc} \prec \st{\apd}$.

We then identify tight lower bounds for both concise and verbose descriptors in
the set of six, as well as asymptotic lower bounds for worst-case complexity of
sizes of faithful sets.  Because faithful sets of concise descriptors require
many perpendicular directions to each maximal simplex, a huge hindrance in
practice, we believe applications research may benefit from the use of verbose
descriptors rather than the current standard of concise descriptors.

Perhaps the strength classes $\st{\ecc}, \st{\bc},$ and $\st{\pd}$ intuitively
feel as though they should be related by strict inequalities.  However, this
issue is nuanced. \lemref{trivial} (\appendref{other-descriptors}) shows the
impact that general position assumptions have on relations in this
set.  But even with general position, the seemingly advantageous ``extra''
information of homology compared to, e.g., Euler characteristic may no longer
be so useful when we require tight envelopes around each maximal
simplex. That is, once we have all the (many) required directions, we have
already carved out the space filled by the complex, and already know quite a
lot simply from the presence of events.  Non-equality/equality of concise
descriptors remains an area active of research.

In other ongoing work, we hope to classify the simplicial complexes for which
minimal faithful sets of verbose descriptors are independent of the size of the
complex. We are also interested in relating other common descriptor types, such
as merge trees.

%
% Bibliography
%
\small
\bibliographystyle{abbrv}
\bibliography{references}

\normalsize
\appendix
\section{Example Filtration with Six Descriptor Types}
\label{append:six}
We consider the simplicial complex on four vertices given in \subfigref{six}{six-K}.
In the $e_1$ direction 
we see four distinct heights of vertices,
$a$, $b$, $c$, and $d$. 

First, we describe what happens in $\pd(K)$ and $\apd(K)$. At height
$a$, and then again at height $b$, we see connected components born.  At
height~$c$, the homology of the sublevel set does not change, so no
change is recorded in $\pd(K)$. However, a corresponding index filtration sees
the connected component corresponding to first adding the vertex at $c$, which
then immediately dies as we include the edge at height $c$. Thus, in
$\apd(K)$, we have the point $(c,c)$. For similar reasons, we see the point
$(d,d)$ in $\apd(K)$. Also at height~$d$, our two connected components
merge into a single connected component. It is a standard convention to 
choose the eldest component to survive, so we have the point $(b,d)$ in both
diagrams. We also see a cycle appear at height $d$, giving us the point
$(d, \infty)$ in both diagrams.  Finally, since the connected component born at
height~$a$ did not die, we have the point $(a, \infty)$ in both diagrams.

Next, we describe what happens in $\bc(K)$ and $\abc(K)$.  At the height $a$,
only the Betti number in dimension zero changes, going from zero to one. Since
the inclusion of this vertex increased Betti zero, we count the vertex as
positive in $\abc_0(K)$. At
the height $b$, again, only Betti zero changes, going from one to two, and we
also count the corresponding vertex as positive for~$\abc_0(K)$. At the height
of $c$, no Betti number changes, and thus there is no event in $\bc(K)$.
However, in a corresponding index filtration, the inclusion of the vertex at
height~$c$ increases $\beta_0$ by one, and the inclusion of the edge
then reduces $\beta_0$ by one. This is recorded in $\abc(K)$ at $c$ as an
additional positive simplex (going from two total to three total), and our
first negative simplex.  We see similar behavior in dimension zero at the
height of $d$. At height $d$ we also see $\beta_1$ go from zero to one,
which is recorded in the concise Betti function. In a corresponding index
filtration, the inclusion of the second edge at height $d$ increases
$\beta_1$, and is thus recorded as positive. 

Finally, we describe what happens in $\ecc(K)$ and $\aecc(K)$. At both heights
$a$ and $b$, the Euler characteristic of the sublevel sets increases by
one. Since this is due to inclusions of vertices, which are even-dimensional
simplices, both of these increases are recorded as even-dimensional in
$\aecc(K)$. At $c$, the Euler characteristic remains the same, so no change
occurs in $\ecc(K)$. In a corresponding index filtration, we see the vertex (an
even-dimensional simplex) and an edge (odd-dimensional simplices), which are
recorded in $\aecc(K)$. Finally, the Euler characteristic at
$d$ changes from two to zero, which is recorded directly in $\ecc(K)$.
An index filtration witnesses the appearance of one even and two odd
simplices at height $d$, and this is recorded in $\aecc(K)$.

\begin{figure*}
    \centering
    \begin{subfigure}[b]{0.3\textwidth}
        \includegraphics[scale=.6]{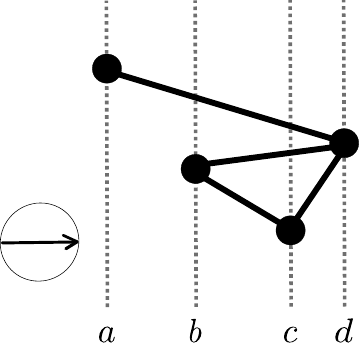}
        \subcaption{$\simComp$}
        \label{subfig:six-K}
    \end{subfigure}
    \begin{subfigure}[b]{0.3\textwidth}
        \includegraphics[scale=.6]{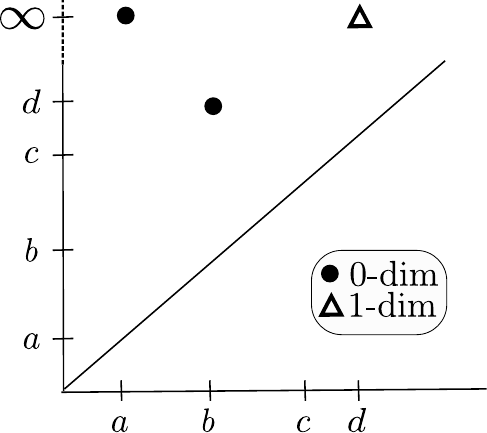}
        \subcaption{$\pd(K)$}
        \label{subfig:six-pd}
    \end{subfigure}
    \begin{subfigure}[b]{0.3\textwidth}
        \includegraphics[scale=.6]{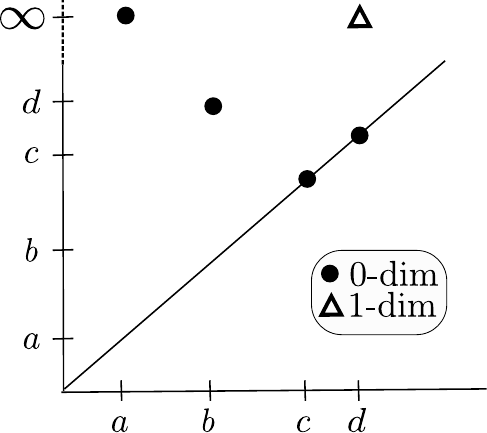}
        \subcaption{$\apd(K)$}
        \label{subfig:six-apd}
    \end{subfigure}
    \begin{subfigure}[b]{0.3\textwidth}
        \includegraphics[scale=.6]{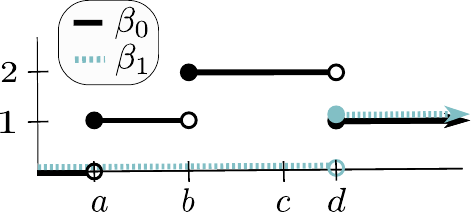}
        \subcaption{$\bc(K)$}
        \label{subfig:six-bc}
    \end{subfigure}
    \begin{subfigure}[b]{0.6\textwidth}
        \includegraphics[scale=.6]{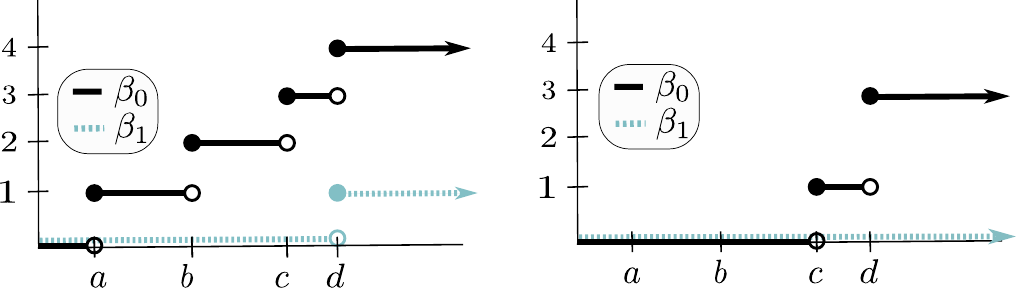}
        \subcaption{$\abc(K)$ (left: positive for ${\beta}_k$, right:
        negative for ${\beta}_k$}
        \label{subfig:six-abc}
    \end{subfigure}
    \begin{subfigure}[b]{0.3\textwidth}
        \includegraphics[scale=.6]{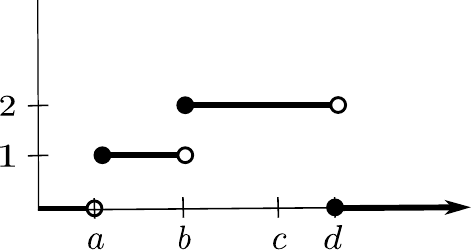}
        \subcaption{$\ecc(K)$}
        \label{subfig:six-ecc}
    \end{subfigure}
    \begin{subfigure}[b]{0.6\textwidth}
        \includegraphics[scale=.6]{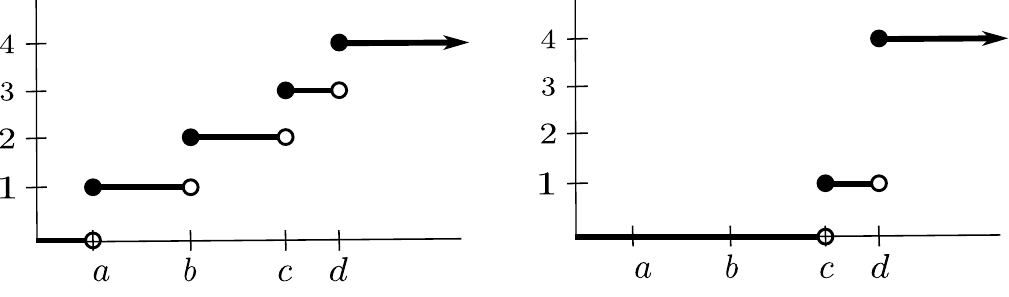}
        \subcaption{$\aecc(K)$ (left: even-dimensional, right: odd-dimensional)}
        \label{subfig:six-aecc}
    \end{subfigure}
    \caption{Six descriptors corresponding to the lower-star filtration in
    the direction indicated by the arrow of the simplicial complex in (a).
    %Both $\abc$ and $\aecc$ are shown as two plots, the projections onto
    %the first and second coordinates.
    }
    \label{fig:six}
\end{figure*}

\section{Zoo of Other Descriptor Types}
\label{append:other-descriptors}
In \secref{preliminary}, we adopt a general definition of topological
descriptor (\defref{descriptors}). In this appendix, we explore non-standard
topological descriptors and corresponding scenarios that may arise as a result
of this generality. The descriptors presented here are not intended to be taken
as anything that would necessarily make sense to use in practice, but rather,
as a sort of zoo of examples to get a quick glance at the mathematical extremes
and properties of the space of strength classes of topological descriptors.

First, we give an example of two distinct descriptor types that are in the same
equivalence class of strength.
\begin{example}
    \label{ex:sameclass}
Consider the topological descriptor denoted $-\pd$ that takes a lower-star
filtration in direction $s$, and produces~$\pd(-s)$, the persistence diagram in
direction $-s$. Although generally~$\pd(s)
\neq \pd(-s)$ (as multisets), a faithful
set~$\descSet{\pd}{K}{S}$ has the same cardinality as the faithful
set~$\descSet{-\pd}{K}{-S}$, and if a simplicial complex has no faithful set of
type~$\pd$, then it has no faithful set of type~$-\pd$.  Thus,~$\st{\pd} =
\st{-\pd}$.
\end{example}

Next, we observe that many examples of topological descriptors are not capable
of faithfully representing most simplicial complexes, such as the following.

\begin{descriptor}[First Vertex]
    \label{desc:convexcloud}
    Consider a descriptor $D_{V}$ that returns (1) the coordinates of the
    first vertex (or vertices) encountered and (2) the cardinality of the vertex set, but
    no other information.
\end{descriptor}

If the filtrations are directional filtrations, then
this descriptor is only faithful for convex point clouds.
Any set of vertices that defines the corners of a convex region can be
faithfully represented by this \descref{convexcloud}. However, since no vertex
interior to the convex hull nor any higher dimensional simplices are witnessed
by any direction, this descriptor type is incapable of faithfully representing
any other type of simplicial complex.

We can also construct descriptor types that are simply never able to form
faithful sets.

\begin{descriptor}[Trivial]
    \label{desc:incapable}
    Consider the trivial descriptor type $D_{0}$ that returns zero for all sublevel sets in
a filtration.
\end{descriptor}

Although this trivial descriptor type is an invariant of any filtration, it can
not faithfully represent any simplicial complex.
Thus, $\Gamma(K,D_0)=\aleph_{\top}$ for all $K$.
And so, in the space of all topological descriptors, \descref{incapable}
is in the minimum strength class.  We can (also trivially)
construct a descriptor type that is in the maximum strength class.

\begin{descriptor}[Filtration-Returning]
    \label{desc:incapable}
    Consider the descriptor type $D_{Filt}$ that returns the input~filtration.
\end{descriptor}

Thus, a single descriptor of this type is always
	faithful for a simplicial complex.

Finally, we can find instances of topological descriptors that are able to
faithfully represent a simplicial complex, but with a set no smaller than
uncountably infinite.

\begin{descriptor}[Indicator]\label{desc:infinite}
    Let~$K$ be a simplicial complex immersed in~$\R^d$ let $D_{\R}$ be a descriptor type
    parameterized by $\R^d$ that is constant over
    a filtration and is defined by
    \[
        D(K, x) = \begin{cases}
            1 \,\,\,\, \text{if   } x \in |K| \\
            0 \,\,\, \text{else}.
        \end{cases}
    \]
\end{descriptor}

Note that then, $\descSet{D}{K}{\R^d}$ is the (only) minimum faithful set for
$K$, and so $\Gamma(K,D_{\R})=\aleph_1$ for all $K$.
Thus, the (minimal) strength class of \descref{infinite} is
greater than the strength class of the trivial descriptor in \descref{incapable},
and there are no strength equivalence classes between them.

We now know the space of strength classes of topological descriptors has a
minimum and maximum, and we have identified a second smallest descriptor type;
is it a total order?  The following example shows that it
is not; there do indeed exist incomparable
descriptor~types.

\begin{lemma}[Incomparable Strength Classes]\label{lem:incomparable}
    There exist incomparable strength classes of topological descriptor types.
\end{lemma}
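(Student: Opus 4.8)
The plan is to exhibit two descriptor types $A$ and $B$, together with two simplicial complexes $K_1$ and $K_2$, that defeat comparability in both directions straight from \defref{preceq}. Recall that $\st{A} \preceq \st{B}$ means $\Gamma(K,A) \geq \Gamma(K,B)$ for \emph{every} $K$; so to prove $A$ and $B$ are incomparable it suffices to produce one complex $K_1$ with $\Gamma(K_1,A) < \Gamma(K_1,B)$ (which rules out $\st{A} \preceq \st{B}$) and one complex $K_2$ with $\Gamma(K_2,B) < \Gamma(K_2,A)$ (which rules out $\st{B} \preceq \st{A}$). The guiding idea is to make each of $A$ and $B$ an ``expert'' on a different, disjoint class of complexes and completely blind on the complementary class.

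Concretely, I would splice together the two extreme descriptors from the zoo. Let $A$ return the full filtered complex $D_{Filt}(K,s)$ when $K$ is a point cloud (every simplex is a vertex), and return the trivial value $D_0(K,s)=0$ otherwise; let $B$ be the mirror image, returning $D_{Filt}(K,s)$ when $K$ contains at least one edge and $0$ when $K$ is a point cloud. Both are legitimate descriptor types, since \defref{descriptors} asks only for a map out of filtered complexes, and such a map is free to branch on the well-defined property ``$\dim K = 0$.'' Now take $K_1$ to be any point cloud (say, two distinct vertices) and $K_2$ to be any complex containing an edge (say, a single edge), both immersed in $\R^d$.

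On $K_1$, the single descriptor $A(K_1,s)=D_{Filt}(K_1,s)$ already recovers $K_1$; any complex containing an edge yields $0$ and so cannot match it, while any point cloud matching $A(K_1,\cdot)$ must equal $K_1$ by faithfulness of $D_{Filt}$. Hence $\Gamma(K_1,A)=1$. By contrast $B(K_1,\cdot)\equiv 0$, so $K_1$ is indistinguishable under $B$ from every other point cloud, giving $\Gamma(K_1,B)=\aleph_{\top}$. The symmetric computation yields $\Gamma(K_2,B)=1$ and $\Gamma(K_2,A)=\aleph_{\top}$. Therefore $\Gamma(K_1,A) < \Gamma(K_1,B)$ and $\Gamma(K_2,B) < \Gamma(K_2,A)$, so neither $\st{A}\preceq\st{B}$ nor $\st{B}\preceq\st{A}$ holds, and the two strength classes are incomparable.

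The only genuine bookkeeping, and the step I would write most carefully, is justifying the two $\aleph_{\top}$ values: this requires that each ``blind'' class contain at least two distinct complexes (so the constant-$0$ output genuinely fails faithfulness) and that cross-class complexes be separated automatically by the nonzero-versus-zero output. Both are immediate here, so I do not expect a real obstacle; the entire content of the lemma lies in choosing descriptors whose domains of faithfulness are set-theoretically incomparable.
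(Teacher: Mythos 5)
Your proof is correct, but it instantiates the argument differently from the paper. The skeleton is the same (and essentially forced by \defref{preceq}): exhibit $K_1$ with $\Gamma(K_1,A)<\Gamma(K_1,B)$ and $K_2$ with $\Gamma(K_2,B)<\Gamma(K_2,A)$. Where you differ is in the choice of $A$ and $B$: you splice the two extreme zoo descriptors, branching on whether $\dim K = 0$, so that each descriptor is filtration-returning on one class of complexes and trivial on the other. The paper instead pits the First Vertex descriptor $\gen_V$ against $\apd$, one of the six main descriptor types, using $K=\{v_1\}$ (where a single $\gen_V$ descriptor is faithful but any faithful set of \apdnames in $\R^2$ needs two linearly independent directions) and a single edge (where $\gen_V$ admits no faithful set at all, since it cannot distinguish the edge from its two bare endpoints, while two \apdnames suffice). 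Your construction buys triviality of verification: both crossed inequalities are the extreme $1 < \aleph_{\top}$, the within-class indistinguishability is immediate from the constant-zero output, and no geometric reasoning is needed; you correctly flag the only bookkeeping point, namely that each blind class contains at least two complexes. What the paper's choice buys is significance: it shows incomparability already arises between a (mildly) natural descriptor and a descriptor type in active use, rather than between descriptors engineered solely to be incomparable, which better supports the narrative that the strength order on realistic descriptor types is genuinely partial. Both proofs are valid; yours is the more mechanical existence proof, the paper's the more informative example.
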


\begin{proof}
    Let $\gen_V$ denote \descref{convexcloud}. That is,
    given a direction $s$, $\gen_V$ returns: (1) the coordinates of the
    lowest vertex (or vertices) in direction~$s$, and (2) the cardinality of the
    vertex set.  We compare~$\gen_V$ with
    verbose persistence diagrams.
    Let~$v_1 = (0,0)$ and~$v_2 = (0,1)$.

    First, consider the simplicial complex $\simComp = \{v_1\}$. Then, regardless
    of direction, a single descriptor of is faithful for $\simComp$.
    However, since $K$ is in $\R^2$, any faithful set of \apdnames must have
    at least two linearly independent directions to recover both
    coordinates of~$\simComp$.

    Next, consider the simplicial complex $\simComp' = \{v_1, v_2\}$.
    No set of descriptors of type $\gen_V$ is faithful for~$\simComp'$
    (it cannot distinguish between $\simComp'$ and the simplicial complex consisting
    of the two disconnected vertices~$v_1$ and~$v_2$ without an edge). However,
    two \apdnames
    suffice to form a faithful set for $\simComp'$; for example, using the
    standard basis vectors~$\{e_1,e_2\}$ as the
    set. Thus, if
    $\descSet{\gen_V}{K}{S_{\gen_V}}$ and $\descSet{\apd}{K}{S_{\apd}}$
    are both minimal faithful sets, we see
    that~$|S_{\gen_V}| < |S_{\apd}|$ but $|S_{\gen_V}| >
    |S_{\apd}|$. Thus, although we have shown $\st{\gen_V} \neq
    \st{\apd}$, they are
    incomparable.
\end{proof}

Finally, we give a lemma that shows an impact of not assuming general position
of vertices.  

\begin{lemma}[Concise Equality]\label{lem:trivial}
    Without \assref{mostgeneral}, the strength equivalence classes of
    the three concise topological descriptor types from \secref{descriptors} are
    all~equal.
\end{lemma}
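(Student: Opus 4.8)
The plan is to lean on the chain $\st{\ecc} \preceq \st{\bc} \preceq \st{\pd}$ from \lemref{twogroups}, whose reduction proof (forgetting pairings, then taking alternating sums) never invokes \assref{mostgeneral} and so survives when general position is dropped. This chain already gives $\Gamma(K,\ecc) \geq \Gamma(K,\bc) \geq \Gamma(K,\pd)$ for every complex $K$, so collapsing all three classes reduces to the single reverse inequality $\Gamma(K,\ecc) \leq \Gamma(K,\pd)$ for every $K$ in the enlarged, non-general-position universe; sandwiching then forces all three $\Gamma$ values to coincide. I would obtain this reverse inequality by proving the stronger fact that, without \assref{mostgeneral}, the \emph{same} direction sets are faithful for both types: for every finite $S \subseteq \sph^{d-1}$, the set $\descSet{\ecc}{K}{S}$ is faithful if and only if $\descSet{\pd}{K}{S}$ is. Taking infima over faithful $S$ then yields equality of the two $\Gamma$ values directly.

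One direction of this equivalence is automatic and needs no degeneracy. If $\descSet{\ecc}{K}{S}$ is faithful and some $L$ satisfies $\pd(L,s) = \pd(K,s)$ for all $s \in S$, then because $\ecc$ is computable from $\pd$ (the reduction underlying \lemref{twogroups}), also $\ecc(L,s) = \ecc(K,s)$ for all $s$, whence $L = K$; thus $\descSet{\pd}{K}{S}$ is faithful. The content of the lemma is the converse, and I would prove it by contraposition with an explicit decoy construction. Assume $\descSet{\ecc}{K}{S}$ is not faithful, so there is $L \neq K$ with $\ecc(L,s) = \ecc(K,s)$ for all $s \in S$: the two complexes agree on every alternating simplex count but may disagree on individual Betti numbers or on birth--death pairings in some directions. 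The goal is to manufacture from this Euler-twin a \emph{persistence}-twin $M \neq K$ with $\pd(M,s) = \pd(K,s)$ for all $s \in S$, witnessing that $\descSet{\pd}{K}{S}$ is not faithful either. Dropping \assref{mostgeneral} is exactly what creates room: I would insert compensating simplices placed in degenerate (collinear, coplanar, or more generally affinely dependent) position so that, in each of the finitely many directions of $S$, the homological and dimensional discrepancies are forced to a common filtration value and cancel at the level of persistence pairings rather than merely in the Euler characteristic.

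The main obstacle is carrying out this decoy construction \emph{simultaneously} across all directions of the finite set $S$: the degenerate placements must mask the distinguishing feature in every direction at once, which requires the compensating simplices to share affine subspaces with existing faces uniformly over $S$. This is precisely the configuration that \assref{mostgeneral} rules out, and it is the reason the analogue under general position---whether $\st{\ecc} = \st{\pd}$---is still open, as flagged in \secref{discussion}. I would treat the intermediate type $\bc$ either by rerunning the identical construction with Betti functions in place of Euler characteristic functions, or obtain it for free from the sandwich once the two extremes are shown equal. A final routine check is that the constructed $M$ is genuinely distinct from $K$ and is a legitimate immersed complex, which is immediate once general position is no longer demanded.
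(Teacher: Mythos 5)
There is a genuine gap, and it sits exactly where you flag it: the whole lemma rests on your converse step---manufacturing, from an arbitrary Euler-twin $L$ of $K$, a persistence-twin $M$ with $\pd(M,s) = \pd(K,s)$ for all $s \in S$---and you never carry this construction out. You name the simultaneity over $S$ as ``the main obstacle'' and then leave it unresolved, so the proposal is a plan rather than a proof. Nor is it clear the plan can succeed as stated: an Euler-twin may differ from $K$ by arbitrary homological discrepancies (wrong Betti numbers, wrong birth--death pairings) spread over many heights, and no recipe is given for choosing compensating degenerate simplices that cancel all of these at the level of persistence pairings in every direction of $S$ at once. A secondary flaw: you quantify only over \emph{finite} $S$, but $\Gamma(K,D)$ is an infimum over parameter sets of any cardinality, and in the degenerate universe the relevant value turns out to be $\aleph_{\top}$, so faithful sets of arbitrary (even uncountable) cardinality must be ruled out; an equivalence restricted to finite $S$ cannot do that.

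The paper's proof sidesteps your obstacle entirely by splitting on whether $K$ has an edge. If $K$ contains an edge $\tau$, one does not start from an Euler-twin at all: barycentric subdivision of $\tau$ and of all simplices containing it produces $L \neq K$ with $|L| = |K|$---a legal complex precisely because \assref{mostgeneral} is dropped---and since the underlying spaces of all sublevel sets agree in every direction, $\descSet{\gen}{K}{s} = \descSet{\gen}{L}{s}$ for every $s$ and every $\gen \in \{\ecc, \bc, \pd\}$. Hence no faithful set of \emph{any} of the three types exists, and $\Gamma(K,\gen) = \aleph_{\top}$ for all three simultaneously. If instead $K$ is a vertex set, all three descriptors carry the same information in each direction (heights at which components appear), and \lemref{envelopes} together with \corref{perp} gives $\Gamma(K,\gen) = d+1$ for all three. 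The insight you missed is that without general position the three classes collapse not because faithful sets can be matched size-for-size by decoys tailored to an Euler-twin, but because for every complex with an edge no concise faithful set exists at all: the decoy is exhibited directly against the strongest descriptor, $\pd$, which makes the Euler-twin starting point---and the hard cancellation argument it would require---unnecessary.
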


\begin{proof}
        Let $D \in \{ \ecc, \bc, \pd\}$.
	We must consider faithful sets of such descriptors for an arbitrary
        simplicial complex $K$ immersed in $\R^d$ (that may not be
        in general position). The argument differs depending on if $K$
	is a vertex set, or contains at least one edge; we consider each
	case.

        First, suppose~$n_1 =0$.  Then, $K$ has no edges and is a vertex set,
        meaning each vertex is a maximal simplex of $K$. Then, by \corref{perp},
        faithful sets of type~$D$ must include descriptors from at least $d+1$
        directions.
        By \lemref{envelopes}, the envelopes of each vertex must be
        zero-dimensional. Since the only zero-dimensional convex sets are
        singleton points, the envelope of each vertex contains
        that vertex and nothing else.

        Let $S$ be a set of $d+1$ directions such that the envelope of each
        vertex is zero-dimensional (note that such a set exists, for example,
        the standard basis directions ($e_i$), and the negative diagonal
        direction, $-1/\sqrt{d} (1, 1, \ldots, 1)$).
        Consider a single direction, $s \in S$ and $a \leq b \in \R^d$.  If no
        event occurs in $\gen(s)$ between heights $a$ and $b$, we know no
        connected component of~$K$ has
        its lowest vertex (or vertices) with respect to $s$ in the range from
        $a$ to $b$.  Then, from $\gen(s)$, we identify that $K$ has $n_0$
        connected components, and we know their starting heights with respect
        to the $s$ direction.  In other words, we know these connected
        components are contained in the (closed) upper half-spaces defined by $s$.

        Each additional direction in $S$ gives us more information about these
        $n_0$ connected components, and, just like~$s$, each additional
        direction provides an additional upper half-space in which we know the
        connected component is contained; each connected component must lie in
        the intersection of these half-spaces. Since we use a total of $d+1$
        directions, chosen so that the envelope of each vertex, $\E^S_v$ (the
        intersection of half-spaces), is zero-dimensional, we conclude each
        connected component is zero-dimensional.

        That is, we know the exact location of each vertex by identifying its
        envelope. Thus, minimal faithful sets of type~$D$ have cardinality of
        exactly~$d+1$, meaning the infimums considered in \defref{equal}
        are~$\Gamma(K,D) = d+1$ for such $K$.

        Next, suppose~$n_1 >1$.  We show
	no set of descriptors of type $D$ can faithfully represent~$K$.
        Let~$\tau$ be an edge in $K$ and construct another complex $L$ by starting
        with $K$ and taking the barycentric subdivision of~$\tau$ and all
        simplices containing $\tau$.  Then, since~$|K|=|L|$ the Euler
        characteristics/Betti numbers/homology throughout the filtrations of $K$
        or $L$ agree. That is, for every direction~$s$,~$\descSet{\gen}{K}{s} =
        \descSet{\gen}{L}{s}$. Since this is true for every $s$, the descriptor
        type $D$ is incapable of forming a faithful set for $K$, meaning the
        infimums considered in \defref{equal} are~$\Gamma(K, D) =
        \aleph_{\top}$ for such~$K$.

        We have shown that, for each $K$ without a general position assumption,
        we have $\Gamma(K,\ecc) = \Gamma(K, \bc) = \Gamma(K, \pd)$. Thus, when
        removing the general position assumption, we find
        $\st{\ecc} = \st{\bc} = \st{\pd}$.
\end{proof}

Fortunately, as shown in \secref{ordering}, the relations among concise
descriptors becomes more interesting when we assume general position. This also
has the benefit of reflecting the general position assumptions that are often
taken in practical applications.

\end{document}